\DeclareMathSymbol{\shortminus}{\mathbin}{AMSa}{"39}
\theoremstyle{plain}
\newtheorem{theorem}{Theorem}[section]
\newtheorem{lemma}[theorem]{Lemma}
\newtheorem{corollary}[theorem]{Corollary}
\theoremstyle{definition}
\newtheorem{definition}[theorem]{Definition}
\theoremstyle{remark}
\newtheorem{remark}[theorem]{Remark}
\newcommand{\ind}{\rotatebox[origin=c]{90}{$\models$}}
\newcommand{\X}{{\mathbf{X}}}
\newcommand{\cG}{{\mathcal{G}}}
\newcommand{\cH}{{\mathcal{H}}}
\newcommand{\cE}{{\mathcal{E}}}
\newcommand{\cB}{{\mathcal{B}}}
\newcommand{\cC}{{\mathcal{C}}}
\newcommand{\cM}{\mathcal{M}}
\newcommand{\cS}{{\mathcal{S}}}
\newcommand{\cT}{{\mathcal{T}}}
\newcommand{\cU}{{\mathcal{U}}}
\newcommand{\cV}{{\mathcal{V}}}
\newcommand{\doublemarked}{\circ\!\!-\!\!\circ}
\title{Separation-Based Distance Measures for Causal Graphs}
\author{
  Jonas Wahl \\
  German Research Centre for Artificial Intelligence (DFKI) \\ Saarbr\"ucken, Germany \\
  \texttt{jonas.wahl@dfki.de} \\
  %% examples of more authors
   \And
  Jakob Runge \\
  University of Potsdam \\ Potsdam, Germany; \\
    Center for Scalable Data Analytics and \\ Artificial Intelligence (ScaDS.AI) Dresden/Leipzig, \\ Dresden, Germany;\\
    Technische Universit\"at Berlin \\ Berlin, Germany \\
  \texttt{jakob.runge@uni-potsdam.de} \\
  %% \AND
  %% Coauthor \\
  %% Affiliation \\
  %% Address \\
  %% \texttt{email} \\
  %% \And
  %% Coauthor \\
  %% Affiliation \\
  %% Address \\
  %% \texttt{email} \\
  %% \And
  %% Coauthor \\
  %% Affiliation \\
  %% Address \\
  %% \texttt{email} \\
}
\begin{document}
\maketitle

\begin{abstract}
Assessing the accuracy of the output of causal discovery algorithms is crucial in developing and comparing novel methods. Common evaluation metrics such as the structural Hamming distance are useful for assessing individual links of causal graphs. However, many state-of-the-art causal discovery methods do not output single causal graphs, but rather their Markov equivalence classes (MECs) which encode all of the graph's separation and connection statements. In this work, we propose additional measures of distance that capture the difference in separations of two causal graphs which link-based distances are not fit to assess. The proposed distances have low polynomial time complexity and are applicable to directed acyclic graphs (DAGs) as well as to maximal ancestral graph (MAGs) that may contain bidirected edges. We complement our theoretical analysis with toy examples and empirical experiments that highlight the differences to existing comparison metrics. 

%Many state-of-the-art causal discovery methods are designed to infer the Markov equivalence class (MEC) of a causal graph which encodes its graphical separation and connection statements. In this work, we argue that any evaluation of a causal discovery method should include an analysis of how well this explicit goal is achieved. Since MECs encode the totality of separation statements implied by a causal graph, such an analysis requires measures of distance that capture the difference in separations of two causal graphs. We show that established measures of comparison do not accurately capture this difference, and we introduce new measures that address this shortcoming. The proposed distances have low polynomial time complexity and are applicable to directed acyclic graphs (DAGs) as well as maximal ancestral graph (MAGs) that may contain bidirected edges. We complement our theoretical analysis with toy examples and empirical experiments that highlight the differences to existing comparison metrics.
\end{abstract}

\section{INTRODUCTION}\label{sec:intro}
Inferring causal relations from observational data in the form of a \emph{causal graph} is a highly challenging task for which causality researchers have proposed numerous algorithms. While the assumptions on which these algorithms rely differ widely, many existing approaches, including the PC algorithm \citep{spirtes_causation_1993}, and its descendants stablePC \citep{colombo_order-independent_2014}, consistentPC \citep{li_constraint-based_2019}, PCMCI+ \citep{runge_discovering_2020}, LPMCI \citep{gerhardus_high-recall_2020}, FCI \citep{spirtes_causal_2013}, as well as GES \citep{chickering_optimal_2003} have in common that they are provably able to infer the correct causal graph up to \emph{Markov equivalence} in the infinite sample limit if their respective assumptions are fulfilled. More precisely, each of these methods assumes that the process that generated the data under investigation can be represented by a causal graph that belongs to a predefined class $\mathbb G$, most commonly the class of \emph{directed acyclic graphs (DAGs)} for the PC algorithm, GES, and many others.  The class of graphs $\mathbb{G}$ is equipped with a notion of \emph{separation}, for instance, $d$-separation for DAGs. Separation is a graphical criterion that specifies whether two nodes $X, Y$ in a graph are either \emph{separated} or \emph{connected} by another subset of nodes $\cS$. In practice, separation is used to translate conditional independence statements regarding the node variables of causal graphs into graphical language. Two graphs belonging to the same class $\mathbb G$ are called \emph{Markov equivalent} if they share exactly the same separation/connection statements.
%specifies whether two nodes $X,Y$ in a graph are \emph{separated} or \emph{connected} by a subset of nodes $\cS$.

%. For instance, this class of graphs is $\mathbb G = \{ \text{directed acyclic graphs (DAGs)} \}$ for the PC-algorithm and GES, $\mathbb G = \{ \text{maximal ancestral graphs (MAGs)} \}$ for FCI and $\mathbb G = \{ \text{time series DAGs (tsDAGs)} \}$ for PCMCI+. The class of graphs $\mathbb{G}$ is equipped with a notion of \emph{separation}, that is, a graphical criterion that specifies whether two nodes $X,Y$ in a graph are \emph{separated} or \emph{connected} by a subset of nodes $\cS$. The most commonly used type of separation is $d$-separation for DAGs (and tsDAGs) \citep{spirtes_causation_1993} which generalizes to $m$-separation for MAGs \citep{zhang_causal_2008}. %Other types of separation also exist, e.g. $\sigma$-separation for cyclic graphs, see \citep{bongers_foundations_2021}. 
Even if their inferences are correct, the methods mentioned above do not output the full `ground truth' graph underlying the data-generating process but only its Markov equivalence class (MEC), or in other words, the totality of separation/connection statements implied by the `true' graph. Some methods such as LiNGaM \citep{shimizu_linear_2006}) utilize further assumptions about functional dependencies and noise distributions to go beyond the MEC, but without such more restrictive assumptions, the MEC is all one can infer.

New causal discovery algorithms are empirically validated mainly through a range of simulations and experiments that compare their output graphs to a known ground truth. The difference between the output and the true graph is commonly measured with the structural Hamming distance (SHD) \citep{acid_searching_2003,tsamardinos_max-min_2006}, or false positive/negative edge detection rates. Such metrics can be used to assess the correctness of the presence and absence of individual adjacencies and their orientations which represent direct causal relationships. To assess the quality of an output graph with respect to the downstream task of quantifying total causal effects, further comparison metrics have been developed. The most notable example is the structural intervention distance (SID) \citep{peters_structural_2014}, which was recently reframed by \citet{henckel2024adjustment} as a special case of a so-called \emph{adjustment identification distance (AID)} alongside a sped-up algorithm for its computation and numerous other improvements and generalizations. Measures of predictive performance that have been fine-tuned to causal graphs exist as well, see e.g. \citep{liu_stability_2010,biza_tuning_2020}, but so far, causal discovery researchers do not seem to have incorporated them into their performance analyses on a broad scale.

\paragraph{Contributions} 
In this work, we propose to add another family of comparison metrics to the existing canon that compares the implied separation/connection statements of two graphs. Surprisingly, even though separation statements are  what is actually being inferred by many algorithms, they are rarely considered in evaluations, with a few notable exceptions, e.g. \citep{hyttinen2014constraint}. To fill this gap, we discuss two types of separation-based graph distances:
\begin{itemize}
    \item \textbf{s/c-metrics} formalize the simple idea to count all possible separation statements, potentially with a prescribed maximal size of the separating set, and compare their validity in both graphs under investigation. This type of comparison arguably yields the most complete picture, and produces a distance measure that is a proper metric in the mathematical sense, no matter the type of separation under discussion. However, it is of limited practical applicability due to the exponential increase of separation statements that need to be checked when the graph size grows. Even though this issue can be alleviated to a degree by working with Monte-Carlo style random approximations of the metric (see Appendix \ref{app.additional_experiments}), a heavy computational overhead remains. In summary, this type of separation-based evaluation yields the arguably most complete picture but lacks desirable scaling properties.
    \item \textbf{separation distances} provide scalable alternatives that transfer the logic of adjustment-based distances \citep{henckel2024adjustment} to separations. Instead of considering all possible separation statements in two graphs $\cG,\cH$, separation distances require the user to specify a \emph{separation strategy} that chooses a single separation set $\cS$ for every pair of separable nodes in $\cH$, and validate whether $\cS$ remains a separating set in the graph $\cG$. We present different possible separation strategies for DAGs as well as for maximal ancestral graphs (MAGs) and their MECs represented by CPDAGs and PAGs. Using the toolset developed in \citep{henckel2024adjustment}, we show that the computational complexity of separation distances is of low polynomial order in the number of nodes, making them employable even for large graphs. For MAGs, to our knowledge, these are the first examples of causal comparison metrics beyond the SHD.
\end{itemize}

%We emphasize that separation-based evaluation (as well as adjustment-based evaluation) for large graphs require a choice to keep the evaluation computationally feasible. The evaluated separation statements need 
We illustrate the differences between separation-based and existing graph distances with a number of toy examples and empirical simulations. We also discuss potential pitfalls in the typical link-based evaluation of causal discovery algorithms, by showing that adding or changing the orientations of few edges on a graph can lead to significant changes in the separations encoded in its MEC. Our findings highlight the importance of utilizing a broad range of evaluation metrics that take into account different causal characteristics of the output of causal discovery algorithms, from direct and total effects to the separation statements we consider here.

\section{NOTATION} \label{sec.notations}

Throughout this work, we consider causal graphs, generically denoted by $\cG$ or $\cH$, over a set $\cV$ of $N$ nodes. More specifically, we will focus on directed acyclic graphs (DAGs), maximal ancestral graphs (MAGs) and their MECs which can be represented by complete partially directed acyclic graphs (CPDAGs) for DAGs and by partial ancestral graphs (PAGs) for MAGs. We recall that a mixed graph (MG) is a graph that may contain directed $\to$ or bidirected edges $\leftrightarrow$ and that a path $\pi = (\pi(1),\pi(2),\dots,\pi(n))$ in an MG is a sequence of adjacent edges in which no non-endpoint appears twice. A non-endpoint node $C$ on a path $\pi$ in a MG is a \emph{collider} if its preceding and succeeding edge both point towards $C$, %e.g. $\to C \leftrightarrow$,
and a \emph{non-collider} if it is not a collider. A path $\pi$ is $m$-blocked by a subset of nodes $\cS$ if $\cS$ contains a non-collider on $\pi$ or if there is a collider on $\pi$ that does not have any descendants in $\cS$. A set $\cS$ is said to $m$-separate\footnote{When $\cG$ is a DAGs, it is more common to speak of $d$-separation rather than $m$-separation.} two nodes $(X,Y)$ on $\cG$ if it $m$-blocks all paths between them, and we denote this by  $X \bowtie_{\cG} Y | \cS$. In this case, $\cS$ will also be called a separator for $(X,Y)$. A path that is not $m$-blocked by the empty set is called $m$-open. A mixed graph contains an \emph{almost directed cycle} if there are is a directed path $X\to \dots \to Y$ and a bidirected edge $X \leftrightarrow Y$. A MAG is a mixed graph that (a) does not contain any directed or almost directed cycle (ancestral) and (b) in which any two non-adjacent nodes can be $m$-separated by some set $\cS$ (maximal). Other types of separation, such as $\sigma$-separation for cyclic graphs \citep{bongers_foundations_2021} exist as well, and many results of this work, particularly those pertaining to s/c-metrics, generalize to any class of graphs equipped with a notion of separation. The symbol $\mathbb G$ will be our generic symbol for a class of graphs on the node set $\cV$, which we hide in the notation because it stays fixed throughout this article. For example, we just write $\mathbb G = \{ \mathrm{DAGs} \}$ for the set of DAGs over $\cV$.

%In the later parts of this work, $T$ will be a generic conditional independence test ( for instance the partial correlation test), and $\alpha$ is the test's significance level. We will often pair these together as $(T,\alpha)$. Different types of causal graphs come with their own notion of  graphical separation, that is to say, a criterion that decides for any graph in the class $\mathbb G$, whether a triple $(X,Y,\cS)$ with $X,Y \in \X, \ X \neq Y$ and $\cS \subset \X \backslash\{X,Y \}$ is \emph{separated} (denoted by $X \bowtie_{\cG_0} Y | \cS \}$) or \emph{connected}(denoted by $X \centernot{\bowtie}_{\cG_0} Y | \cS \}$). The most common notions of separation are $d$-separation for DAGs, $m$-separation for MAGs, or $\sigma$-separation for cyclic graphs \citep{bongers_foundations_2021}. Most statements in this work do not depend on the type of graph as long as consistently uses the corresponding notion of separation (respectively connection). We will use the following sets throughout this work:
We also fix the following conventions:
\begin{itemize}
    \item $\cC$ denotes the set of all triples $(X,Y,\cS)$ with $X,Y \in \cV, \ X \neq Y$ and $\cS \subset \cV \backslash\{X,Y \}$. We implicitly identify triples $(X,Y,\cS)$ and $(Y,X,\cS)$. $\cC^k$ is the set of triples $(X,Y,\cS) \in \cC$ with $|\cS| = k$. In particular, $\cC = \bigsqcup_{k=0}^{N-2} \cC^k$.
    \item We write $\cC_{con}(\cG) = \{ (X,Y,\cS) \in \cC  \ | \ X \centernot{\bowtie}_{\cG} Y | \cS \}$ for the set of connection statements in a graph $\cG$ and $\cC_{sep}(\cG) = \{ (X,Y,\cS) \in \cC  \ | \ X \bowtie_{\cG} Y | \cS \}$ for the set of its separations. We also write $\cC_{con}^k(\cG)$ ($\cC_{sep}^k(\cG)$), for the set of connection (separation) statements of order $k$, i.e. with $|\cS| = k$.
    %\item Given a method $\cM$ employing the test $(T,\alpha)$, $ \cT(\cM,T,\alpha) \subset \cC$ is the set of all triples that was tested by the method. We write $\cT$ instead of $\cT(\cM,T,\alpha)$ if the referenced method and test are clear from the context.
    %\item $\cT^{ind} = \{ (X,Y,\cS) \in \cC \ ; \  X \ind_{(T,\alpha)} Y | \cS \}$, is the set of triples that were judged conditionally independent by the test $(T,\alpha)$;
    %\item $\cT^{dep}= \{ (X,Y,\cS) \in \cC \ ; \  X \centernot\ind_{(T,\alpha)} Y | \cS \}$, is the set of triples that were judged conditionally dependent by $(T,\alpha)$.
    
\end{itemize}

Two causal graphs $\cG, \cH \in \mathbb G$ are called \emph{Markov equivalent} if they share the same separations, i.e. if $\cC_{sep}(\cG) = \cC_{sep}(\cH)$. This equivalence relation partitions $\mathbb G$ into disjoint Markov equivalence classes, and we write $\mathrm{MEC}(\cG) = \{ \cG' | \cG' \sim \cG \}$ for the MEC of the graph $\cG$.  %(or equivalently if $\cC_{con}(\cG) = \cC_{con}(\cH)$).
If $\mathbb G = \{ \mathrm{DAGs} \}$, any MEC can be represented uniquely as a CPDAG which may contain directed as well as undirected edges. The CPDAG representing $\mathrm{MEC}(\cG)$ is defined as the graph with the same skeleton as $\cG$ in which an edge is oriented $X \to Y$ if and only if it is oriented in this way for every class member $\cG' \in \mathrm{MEC}(\cG).$ The MECs of MAGs can be represented as PAGs which may contain directed, bidirected, undirected and semidirected ($\circ\!\!\rightarrow$) edges. The PAG representing $\mathrm{MEC}(\cG)$ is the graph with the same skeleton as $\cG$ in which an edge mark is drawn if and only if the edge mark appears in every class member $\cG' \in \mathrm{MEC}(\cG).$ If we start with a PAG or CPDAG $\cH$, we write $\mathrm{MAG}(\cH)$, respectively $\mathrm{DAG}(\cH)$ for the set of all MAGs/DAGs in the respective MEC, i.e. $\mathrm{MAG}(\cH) = \mathrm{MEC}(\cG)$ for any $\cG \in \mathrm{MAG}(\cH)$.

We also adopt the following terminology for graphical reasoning. A node $Y$ is a parent of a node $X$ in a mixed graph if there is a directed edge $Y\to X$ and a child if there is a directed edge $Y\leftarrow X$. If there is a bidirected edge $Y \leftrightarrow X$, then $Y$ is called a \emph{sibling}\footnote{These conventions unfortunately differ across the literature. We decided to employ the family tree inspired conventions in which sibling means hidden common parent and spouse means common child.} of $X$. $Y$ is an \emph{ancestor} of $X$ if there exists a directed path from $Y$ to $X$ and a \emph{descendant} of $X$ if there is a directed path from $X$ to $Y$. Two nodes that share a common child are called \emph{spouses}. We write $\mathrm{ch}_{\cG}(X),\mathrm{pa}_{\cG}(X), \mathrm{an}_{\cG}(X),\mathrm{des}_{\cG}(X), \mathrm{sib}_{\cG}(X),\mathrm{sp}_{\cG}(X)$ for the children, parents, ancestors, descendants, siblings or spouses of $X$ in $\cG$ respectively. $Y$ is a \emph{possible parent} of $X$ in a MAG $\cG$ if $Y \in \mathrm{ppa}_{\cG}(X) := \bigcup_{\cG'\in \mathrm{MEC}(\cG)} \mathrm{pa}_{\cG'}(X)$ and a \emph{possible ancestor} if $Y \in \mathrm{pan}_{\cG}(X) := \bigcup_{\cG'\in \mathrm{MEC}(\cG)} \mathrm{anc}_{\cG'}(X)$. Clearly possible parents/ancestors only depend on the MEC of a graph and can be read off its graphical representation. For instance, $Y$ is a possible parent of $X$ in a DAG $\cG$ if it is connected to $X$ by an edge $Y \to X$ or $Y-X$ in $\mathrm{CPDAG}(\cG)$. Hence we can also write  $\mathrm{ppa}_{\cG}(X)$ whenever $\cG$ is a CPDAG or a PAG. We use the shortcut notation $\mathrm{pa}_{\cG}(X\cup Y) := \mathrm{pa}_{\cG}(X)\cup \mathrm{pa}_{\cG}(Y)$, and similarly for children, spouses etc. If $\cT \subset \cV$ is a subset of nodes of a graph $\cG$, then $\cG_{\cT}$ is the subgraph of $\cG$ over $\cT$.

%Similarly, for a CI-test $(T,\alpha)$, we define the separation indicator $\iota_{(T,\alpha)}:\cC \to \{0,1 \}$ by
%\begin{align*}
% \iota_{(T,\alpha)}(X,Y,\cS) = \begin{cases} 1 \quad &\text{if } X \ind_{(T,\alpha)} Y | \cS \\  0 \quad &\text{else}.  \end{cases}
%\end{align*}

\section{EXISTING DISTANCE MEASURES} \label{sec.existing}

\paragraph{The Structural Hamming Distance}
The SHD \citep{acid_searching_2003,tsamardinos_max-min_2006} is the most commonly used metric to compare two causal graphs. If $\cG,\cH$ are two causal graphs, $\mathrm{SHD}(\cG,\cH)$ is defined as the number of node pairs that do not have the same type of edge between them in both graphs. For instance, the node pair $(X,Y)$ contributes to $\mathrm{SHD}(\cG,\cH)$ if $X\to Y$ in $\cG$ but there is either no edge between $X$ and $Y$ in $\cH$ or an edge that is oriented differently. The SHD is popular as it can be computed fast with only a few lines of code and it can be adapted to graphs of arbitrary types including those representing MECs. However, the SHD has also been criticized for not properly capturing the \emph{full} causal implications of the differences in the graphs $\cG$ and $\cH$, as it only assesses the presence/absence of direct effects. We illustrate this critique with the following toy example which shows that small structural changes to a graph can change its causal implications considerably.

  \begin{table*}[t!]
   \centering
   \scalebox{0.5}{
    \begin{tikzpicture}
        \Vertex[x=0,size=1.25,color=white,label=$X_1$,fontscale=1.5] {X1}  %
        \Vertex[x=3,size=1.25,color=white,label=$X_2$,fontscale=1.5] {X2} 
        \node[right=1.5cm of X2] (Dummy0) {};
          \node[right=2cm of X2] (Dummy1) {...};%
          \node[right=3cm of X2] (Dummy3) {};
          \Vertex[x=9,size=1.25,color=white,label=$X_M$,fontscale=1.5] {XM}
          \Vertex[x=12,size=1.25,color=white,label=$X_{M+1}$,fontscale=1.5] {XM1}
          \Vertex[x=15,size=1.25,color=white,label=$X_{M+2}$,fontscale=1.5] {XM2}
          \node[right=1cm of XM1] (Dummy2) {...};%
    \Vertex[x=21,size=1.25,color=white,label=$X_{2M}$,fontscale=1.5] {X2M}

     \node[right=1.5cm of XM2] (Dummy5) {};
          \node[right=2cm of XM2] (Dummy6) {...};%
          \node[right=3cm of XM2] (Dummy7) {};

     \Edge[Direct](X1)(X2)
     \Edge[Direct](X2)(Dummy0)
     \Edge[Direct](Dummy3)(XM)
     \Edge[Direct,color=red](XM)(XM1)
     \Edge[Direct](XM1)(XM2)
     \Edge[Direct](XM2)(Dummy5)
     \Edge[Direct](Dummy7)(X2M)

     \node[xshift=-1.5cm] (Dummy4) {\scalebox{1.5}{$\cG$:}};

        \Vertex[x=0,y=-2,size=1.25,color=white,label=$X_1$,fontscale=1.5] {X1}  %
        \Vertex[x=3,y=-2,size=1.25,color=white,label=$X_2$,fontscale=1.5] {X2} 
        \node[right=1.5cm of X2] (Dummy0) {};
          \node[right=2cm of X2] (Dummy1) {...};%
          \node[right=3cm of X2] (Dummy3) {};
          \Vertex[x=9,y=-2,size=1.25,color=white,label=$X_M$,fontscale=1.5] {XM}
          \Vertex[x=12,y=-2,size=1.25,color=white,label=$X_{M+1}$,fontscale=1.5] {XM1}
          \Vertex[x=15,y=-2,size=1.25,color=white,label=$X_{M+2}$,fontscale=1.5] {XM2}
          \node[right=1cm of XM1] (Dummy2) {...};%
    \Vertex[x=21,y=-2,size=1.25,color=white,label=$X_{2M}$,fontscale=1.5] {X2M}

     \node[right=1.5cm of XM2] (Dummy5) {};
          \node[right=2cm of XM2] (Dummy6) {...};%
          \node[right=3cm of XM2] (Dummy7) {};

     \Edge[Direct](X1)(X2)
     \Edge[Direct](X2)(Dummy0)
     \Edge[Direct](Dummy3)(XM)
     \Edge[Direct,color = red](XM1)(XM)
     \Edge[Direct](XM1)(XM2)
     \Edge[Direct](XM2)(Dummy5)
     \Edge[Direct](Dummy7)(X2M)

     \node[xshift=-1.5cm,yshift=-2cm] (Dummy4) {\scalebox{1.5}{$\cH$:}};

     \node[yshift=-3.5cm] (Dummy8') {};

      \end{tikzpicture}
      
      }

    \captionof{figure}{Two DAGs $\cG$ and $\cH$.} \label{fig:toy_example1}
\end{table*}

    %\begin{figure}
    %\centering
    %\includegraphics[width=0.45\textwidth]{Figures/Toy1_new.png}
    %\caption{ Two DAGs $\cG$ and $\cH$.\label{fig:toy_example1}}
    %\end{figure}

\paragraph{Example 1}
 The DAGs $\cG$ and $\cH$ in Figure \ref{fig:toy_example1} differ in the orientation of one edge only, meaning that their SHD is low relative to the total number of node pairs to the point of being negligible for a large number of nodes. On the other hand, any two nodes are connected by an open causal path in $\cG$, while in $\cH$ any node on the left of the collider $X_M$ is $d$-separated from every node on the right. In particular, no causal effects are permeated from the left to the right of the graph, and one could therefore argue that causally these graphs are not very close. In Section \ref{sec.experiments} we also show empirically that edge removal or reversal as in this example, has a much more pronounced effect on separation- and adjustment-based distance measures than on the SHD.
 
 %see Appendix REF for an empirical comparison of the SHD to other graph distances in this toy example. The example has important practical implications. Score-based causal discovery methods like GES \citep{chickering_optimal_2003} or NOTEARS \citep{zheng_dags_2018} operate on a search space in which two graphs are considered close if they are similar in SHD. Therefore, measured in SHD, these methods are inherently robust to small changes of their hyperparameters. Causal claims such as separations on the other hand may be vulnerable to small perturbations as we will demonstrate empirically in Section \ref{subsec.sachs} below.

 %\begin{figure*}[t!]
  %  \centering
  %  \begin{subfigure}[t]{0.4\textwidth}
  %      \centering
  %      \includegraphics[height=1.8in]{Figures/Example1.png}
  %      %\caption{DAGs $\cG, \cH$ and their CPDAGs.}\label{fig:toy_example1}
  %  \end{subfigure}
  %  \begin{subfigure}[t]{0.4\textwidth}
   %     \centering
   %     \includegraphics[height=1.8in]{Figures/Example2.png}
   %     %\caption{DAGs $\cG', \cH'$ and their CPDAGs.\label{fig:toy_example2}}
   % \end{subfigure}
   % \caption{The DAGs $\cG, \cH$ and $\cG', \cH'$ of Examples 1 and 2 and their CPDAGs.} \label{fig:toy_examples}
%\end{figure*}

\paragraph{Adjustment Identification Distances}
AIDs \citep{henckel2024adjustment} are designed to compare DAGs with a view on causal effect estimation downstream tasks, and they include the structural intervention distance (SID) of \cite{peters_structural_2014} as a special case. The core idea of AIDs is to choose an adjustment strategy that produces identification formulas for causal effects implied by the graph $\cH$. In a second step, one then needs to verify whether the inferred identification formulas are valid in the base graph $\cG$. Whenever this is not the case, a penalty of $1$ is added to the distance score. We summarize their exact definitions in Appendix \ref{app.AIDs}.
%We recall exact definitions in Appendix \ref{app}.
\cite{henckel2024adjustment} provide an implementation of AIDs that achieves a computational complexity of only $\mathcal{O}(N^2)$ in sparse graphs. In addition, they propose an adaptation of AIDs to CPDAGs. However, there is no obvious relationship between the AID of two MECs represented by CPDAGs and their class members as the following example illustrates.

   % \begin{figure}
    %\centering
    %\includegraphics[width=0.45\textwidth]{Figures/Toy2_new.png}
    %\caption{ Two DAGs $\cG', \cH'$ and their CPDAGs.\label{fig:toy_example2}}
    %\end{figure}

%\begin{figure}[t!]
   %%\includegraphics[width=0.4\textwidth]{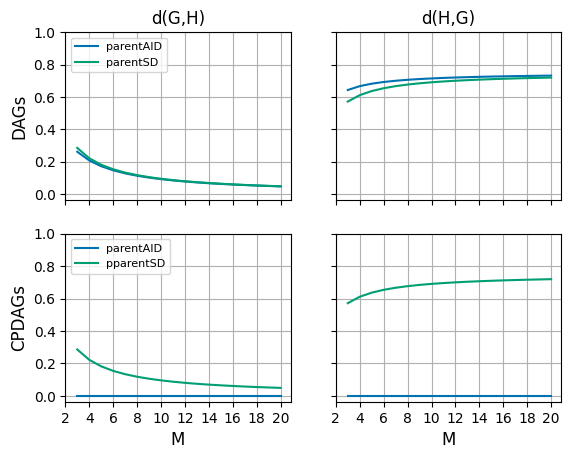}
   %\caption{Empirical comparison between the parent-AID and the separation-based metrics \emph{parent-SD} and \emph{pparent-SD} to be defined in Section \ref{sec.distances} on the DAGs $\cG',\cH'$ and their CPDAGs. The qualitative difference between the adjustment-based metric and its separation-based analog is negligible for DAGs but significant for CPDAGs.\label{fig:toy_example2-evaluation}}
   % \end{figure}

\paragraph{Example 2}

We first consider the two DAGs $\cG', \cH'$ in Figure \ref{fig:toy_example2} which are both causal chains, but with a different variable ordering. Clearly, these graphs are very different structurally as they do not share a single edge. This difference is witnessed by the parent-AID (= SID) when applying it to the DAGs $\cG',\cH'$ directly. However, in either of the associated CPDAGs, there are no identifiable causal effects. Therefore, the CPDAG-parent-AID as defined in \citep{henckel2024adjustment} is zero. As a consequence, an MEC inferred by a causal discovery method may score perfectly when comparing it to the CPDAG of a known ground truth DAG, while \emph{any} of the class members get a non-zero score. %For instance, the best scoring member $\Tilde{\cH}$ of $\mathrm{MEC}(\cH')$ for $M=3$ in Figure REF has parent-AID ADD to the ground truth $\cG'$. 
In other words, there is no guarantee that the AID between two MECs is bounded below (above) by the minimal (maximal) value obtained by their class members. \\
%\begin{align*}
%     \min_{\Tilde{\cH} \in \mathrm{MEC}(\cH) } \mathrm{AID}(\cG,\Tilde{\cH}) &\leq \mathrm{AID}(\mathrm{MEC}(\cG),\mathrm{MEC}(\cH)) \\ 
%     &\leq \max_{\Tilde{\cH} \in \mathrm{MEC}(\cH) } \mathrm{AID}(\cG,\Tilde{\cH}).
%    \end{align*} 
     Our main motivation for introducing separation distances is to define metrics that measure causal claims that are invariant under Markov equivalence. While this is desirable when the output of the discovery task are MECs, it comes at the cost that such distances can never distinguish members of the same MEC by construction. Therefore, when the output of a discovery method is more fine-grained than the MEC, separation distances should be combined with other metrics such as AIDs or the SHD. More generally, we are of the opinion that an evaluation across a broad range of metrics is desirable in most cases anyway. Nevertheless, even when more than the MEC can be inferred, a separation-based comparison is still a valid endeavor. Inferring separations is a valuable aspect of a correct discovery \emph{even if} the method aims for more.

   \begin{table*}[t!]
   \centering
   \scalebox{0.5}{
    \begin{tikzpicture}
        \Vertex[x=0,size=1.25,color=white,label=$X_1$,fontscale=1.5] {X1}  %
        \Vertex[x=2,size=1.25,color=white,label=$X_3$,fontscale=1.5] {X3} 
         \Vertex[x=4,size=1.25,color=white,label=$X_5$,fontscale=1.5] {X5}
          \node[right=1cm of X5] (Dummy1) {...};%
    \Vertex[x=8,size=1.25,color=white,label=$X_{2M\shortminus1}$,fontscale=1.5] {X2M1}

     \Vertex[x=0,y=-2,size=1.25,color=white,label=$X_2$,fontscale=1.5] {X2}  %
        \Vertex[x=2,y=-2,size=1.25,color=white,label=$X_4$,fontscale=1.5] {X4} 
         \Vertex[x=4,y=-2,size=1.25,color=white,label=$X_6$,fontscale=1.5] {X6}
          \node[right=1cm of X6] (Dummy2) {...};%
          \node[above=0.6cm of Dummy2,xshift=0.7cm] (Dummy3) {};%
          \node[above=0.8cm of Dummy2,xshift=-0.7cm] (Dummy4) {};%
    \Vertex[x=8,y=-2,size=1.25,color=white,label=$X_{{\textstyle\mathstrut}2M}$,fontscale=1.5] {X2M}
     \Edge[Direct](X1)(X2)
     \Edge[Direct](X2)(X3)
     \Edge[Direct](X3)(X4)
     \Edge[Direct](X4)(X5)
     \Edge[Direct](X5)(X6)
     \Edge[Direct](X2M1)(X2M)

     \Edge[Direct](Dummy3)(X2M1)
     \Edge[Direct](X6)(Dummy4)

     \node[xshift=-1.5cm,yshift=-1cm] (Dummy4) {\scalebox{1.5}{$\cG'$:}};

        \Vertex[x=15,size=1.25,color=white,label=$X_1$,fontscale=1.5] {X1'}  %
        \Vertex[x=17,size=1.25,color=white,label=$X_3$,fontscale=1.5] {X3'} 
         \Vertex[x=19,size=1.25,color=white,label=$X_5$,fontscale=1.5] {X5'}
          \node[right=1cm of X5'] (Dummy1') {...};%
    \Vertex[x=23,size=1.25,color=white,label=$X_{2M\shortminus1}$,fontscale=1.5] {X2M1'}

     \Vertex[x=15,y=-2,size=1.25,color=white,label=$X_2$,fontscale=1.5] {X2'}  %
        \Vertex[x=17,y=-2,size=1.25,color=white,label=$X_4$,fontscale=1.5] {X4'} 
         \Vertex[x=19,y=-2,size=1.25,color=white,label=$X_6$,fontscale=1.5] {X6'}
          \node[right=1cm of X6'] (Dummy2') {...};%
          \node[above=0.6cm of Dummy2',xshift=0.7cm] (Dummy3') {};%
          \node[above=0.8cm of Dummy2',xshift=-0.7cm] (Dummy4') {};%
    \Vertex[x=23,y=-2,size=1.25,color=white,label=$X_{{\textstyle\mathstrut}2M}$,fontscale=1.5] {X2M'}
     \Edge(X1')(X2')
     \Edge(X2')(X3')
     \Edge(X3')(X4')
     \Edge(X4')(X5')
     \Edge(X5')(X6')
     \Edge(X2M1')(X2M')

     \Edge(Dummy3')(X2M1')
     \Edge(X6')(Dummy4')

     \node[xshift=12cm,yshift=-1cm] (Dummy4') {\scalebox{1.5}{$\mathrm{CPDAG}(\cG')$:}};
     
      \end{tikzpicture}
      }
      \\
      \vspace{0.5cm}
       \centering
   \scalebox{0.5}{
    \begin{tikzpicture}
        \Vertex[x=0,size=1.25,color=white,label=$X_1$,fontscale=1.5] {X1}  %
        \Vertex[x=2,size=1.25,color=white,label=$X_3$,fontscale=1.5] {X3} 
         \Vertex[x=4,size=1.25,color=white,label=$X_5$,fontscale=1.5] {X5}
          \node[right=1cm of X5] (Dummy1) {...};%
    \Vertex[x=8,size=1.25,color=white,label=$X_{2M\shortminus1}$,fontscale=1.5] {X2M1}

     \Vertex[x=0,y=-2,size=1.25,color=white,label=$X_2$,fontscale=1.5] {X2}  %
        \Vertex[x=2,y=-2,size=1.25,color=white,label=$X_4$,fontscale=1.5] {X4} 
         \Vertex[x=4,y=-2,size=1.25,color=white,label=$X_6$,fontscale=1.5] {X6}
          \node[right=1cm of X6] (Dummy2) {...};%
          \node[left=0.1cm of Dummy1] (Dummy3) {};%
          \node[right=0.2cm of Dummy1] (Dummy4) {};%
    \Vertex[x=8,y=-2,size=1.25,color=white,label=$X_{{\textstyle\mathstrut}2M}$,fontscale=1.5] {X2M}
     \Edge[Direct](X1)(X3)
     \Edge[Direct](X3)(X5)
     \Edge[Direct](X5)(Dummy3)
     \Edge[Direct](Dummy4)(X2M1)
     \Edge[Direct](X2M1)(X2)

     \node[left=0.1cm of Dummy2] (Dummy5) {};%
          \node[right=0.2cm of Dummy2] (Dummy6) {};

     \Edge[Direct](X2)(X4)
     \Edge[Direct](X4)(X6)
     \Edge[Direct](X6)(Dummy5)
     \Edge[Direct](Dummy6)(X2M)

     \node[xshift=-1.5cm,yshift=-1cm] (Dummy7) {\scalebox{1.5}{$\cH'$:}};

     \Vertex[x=15,size=1.25,color=white,label=$X_1$,fontscale=1.5] {X1'}  %
        \Vertex[x=17,size=1.25,color=white,label=$X_3$,fontscale=1.5] {X3'} 
         \Vertex[x=19,size=1.25,color=white,label=$X_5$,fontscale=1.5] {X5'}
          \node[right=1cm of X5'] (Dummy1') {...};%
    \Vertex[x=23,size=1.25,color=white,label=$X_{2M\shortminus1}$,fontscale=1.5] {X2M1'}

     \Vertex[x=15,y=-2,size=1.25,color=white,label=$X_2$,fontscale=1.5] {X2'}  %
        \Vertex[x=17,y=-2,size=1.25,color=white,label=$X_4$,fontscale=1.5] {X4'} 
         \Vertex[x=19,y=-2,size=1.25,color=white,label=$X_6$,fontscale=1.5] {X6'}
          \node[right=1cm of X6'] (Dummy2') {...};%
          \node[left=0.1cm of Dummy1'] (Dummy3') {};%
          \node[right=0.2cm of Dummy1'] (Dummy4') {};%
    \Vertex[x=23,y=-2,size=1.25,color=white,label=$X_{{\textstyle\mathstrut}2M}$,fontscale=1.5] {X2M'}
     \Edge(X1')(X3')
     \Edge(X3')(X5')
     \Edge(X5')(Dummy3')
     \Edge(Dummy4')(X2M1')
     \Edge(X2M1')(X2')

     \node[left=0.1cm of Dummy2'] (Dummy5') {};%
          \node[right=0.2cm of Dummy2'] (Dummy6') {};

     \Edge(X2')(X4')
     \Edge(X4')(X6')
     \Edge(X6')(Dummy5')
     \Edge(Dummy6')(X2M')

     \node[xshift=12cm,yshift=-1cm] (Dummy7') {\scalebox{1.5}{$\mathrm{CPDAG}(\cH')$:}};
     \node[yshift=-3.5cm] (Dummy8') {};

      \end{tikzpicture}

      }
      
    %\caption{A directed acyclic graph (DAG).}

    \captionof{figure}{Two DAGs $\cG', \cH'$ and their CPDAGs.} \label{fig:toy_example2}
\end{table*}

\section{SEPARATION-BASED DISTANCES} \label{sec.distances}

\paragraph{s/c-Metric}
Arguably, the most straightforward way to compare the implied separations/connections of two causal graphs is to define a graded sum over all possible separation/connection statements. If $\cG$ is a causal graph with an appropriate notion of separation, we use the separation indicator function $\iota_{\cG}:\cC \to \{0,1 \}$,
\begin{align*}
    \iota_{\cG}(X,Y,\cS) = \begin{cases} 1 \quad &\text{if } X \centernot{\bowtie}_{\cG} Y | \cS \\  0 \quad &\text{if } X \bowtie_{\cG} Y | \cS.  \end{cases}
\end{align*}

%Most causal discovery methods aim to infer the correct Markov equivalence class, or equivalently, the exact separation/connection statements of the causal graph underlying the data-generating process, and provably achieve this task if the method's assumptions are fulfilled and the number of samples is large enough. We argue that empirical performance analyses on finite samples should therefore investigate the correctness of the inferred equivalence class. To that end, we introduce the following measures of distance between causal graphs.

%\begin{definition} \label{def.sc-metric_old}
%Consider two causal graphs $\cG_0,\cG$ over $\X$, equipped with an appropriate notion of separation and let $\cB \subset \cC$. The \emph{s/c-metric} of $\cG_0,\cG$ on $\cB$ is defined as
%\begin{align*}
%    d_{s/c}^{\cB}(\cG_0,\cG) = \frac{1}{|\cB|} \sum_{(X,Y,\cS) \in \cB} | \iota_{\cG_0}(X,Y,\cS) - \iota_{\cG}(X,Y,\cS)|.
%\end{align*}
%If $\cB = \cC$ we speak of the \emph{full s/c-metric} and just write %$d_{sep}$ instead of $d_{s/c}^{\cC}$.
%\end{definition}

\begin{definition} \label{def.sc-metric_new}
Consider two causal graphs $\cG,\cH$ over the same set of $N$ nodes, equipped with an appropriate notion of separation. Define
\begin{align*}
    d_{s/c}^{k}(\cG,\cH) := \frac{1}{|\cC^k|} \sum_{(X,Y,\cS) \in \cC^k} | \iota_{\cG}(X,Y,\cS) - \iota_{\cH}(X,Y,\cS)|
\end{align*}
for $k = 0,\dots, N-2$.
The \textbf{s/c-metric up to order} $\mathbf{K}$ is defined as
\begin{align*}
    d_{s/c}^{\leq K}(\cG,\cH) = \frac{1}{K+1} \sum_{k=0}^K d_{s/c}^{k}(\cG,\cH).
\end{align*}
If $K = N-2$, we speak of the \textbf{full s/c-metric} and write $d_{s/c}$ instead of $d_{s/c}^{\leq N-2}$.
\end{definition}

\begin{figure}[t!]
   \centering
    \includegraphics[width=0.6\textwidth]{Figures/distances_comparison_all_in_one_plot.png}
   \caption{Empirical comparison between the parent-AID and the separation-based metrics \emph{parent-SD} and \emph{pparent-SD} to be defined in Section \ref{sec.distances} on the DAGs $\cG',\cH'$ and their CPDAGs. The label $M$ refers to the index in Figure \ref{fig:toy_example2}. The qualitative difference between the adjustment-based metric and its separation-based analog is negligible for DAGs but clearly visible for CPDAGs.\label{fig:toy_example2-evaluation}}
    \end{figure}

In words, the s/c-metric is a weighted count of the disagreement in separation/connection statements between the two graphs. It defines a mathematical metric on MECs, see Lemma \ref{lem.metric} in Appendix \ref{app.s/c_metrics}. If a graph $\cG$ is fixed as a reference point, for instance because it is the ground-truth in a simulated experiment, we can also compare the separation and connection statements implied by $\cG$ separately to those implied by $\cH$. This yields notions of \emph{false positive and false negative rate for separations}, see Appendix \ref{app.s/c_metrics} for exact definitions and a more detailed discussion. Informal versions of such error rates were already used in the simulation section of \citep{hyttinen2014constraint} but have not been adopted more broadly.
\\ \\
Separation-based measures of this kind have an obvious drawback. The number of separation statements $(X,Y,\cS)$ grows exponentially in the number of nodes so that they are slow to compute and do not scale to large graphs. Therefore, in practice, a manageable subset of separation statements needs to be selected to keep computations manageable. For this, there are two possible approaches. The first one is to choose statements randomly to compute a Monte-Carlo style approximation of the full metric. The second approach is to select only few separating statements according to a predefined deterministic strategy. We provide a plot of the quality of Monte-Carlo style approximation in Appendix \ref{app.additional_experiments}. In the main text, we focus on the second approach, since it delivers a greater computational speed-up and requires a deeper theoretical investigation. 

\paragraph{Separation Strategies}
Separation strategies are the analog of adjustment strategies \cite{henckel2024adjustment} for separation. Roughly speaking, the fundamental idea of \cite{henckel2024adjustment} is to associate to every pair of nodes $(X,Y)$ in a DAG $\cH$ a so-called adjustment set $\mathrm{ad}^{\cH}(X,Y)$ according to a fixed strategy, and to validate whether the induced adjustment formula for computing the causal effect of $X$ on $Y$ remains valid in a second DAG $\cG$. This second DAG $\cG$ is typically (but not necessarily) the ground truth in a simulated experiment. \citep{henckel2024adjustment} explicitly consider parent adjustment, ancestor adjustment, and optimal adjustment~\citep{runge2021necessary} as possible adjustment strategies. We will propose separation analogs of these strategies for DAGs, MAGs and their MECs (CPDAGs and PAGs). This task is non-trivial along two dimensions: first, the direct analogs of parent and ancestor adjustment, parent separation and ancestor separation, while valid for DAGs, are not valid separation strategies for general mixed graphs, see Figure \ref{fig.counterexample_parent_separation} in Appendix \ref{app.missing_proofs}. Based on the work \citep{van_der_zander2020finding}, we will therefore also introduce ZL-separation for MAGs, named after that paper's authors. Secondly, while being a separating set is invariant under Markov equivalence, the specific \emph{choice of separating set}, if done naively, may depend on the MEC member (more on this below).

We fix a class of graphs $\mathbb{G}$ with an appropriate notion of separation in which any pair of non-adjacent nodes can be separated. We primarily have the classes $\mathbb{G} = \{\mathrm{DAGs} \}, \{\mathrm{CPDAGs} \}, \{\mathrm{MAGs} \}, \{\mathrm{PAGs} \} $ with $m$-separation in mind, but we will briefly discuss extensions to cyclic graphs towards the end of this section.
\begin{definition} \label{def.sep_strategy}
    A \emph{separation strategy (sep-strategy)}  for $\cG\in \mathbb G$ is a map $\mathfrak{S}_{\cG}(\cdot, \cdot)$ that associates to every pair of non-adjacent nodes $(X,Y)$ of $\cG$ a set $\mathfrak{S}_{\cG}(X, Y)$ such that $X \bowtie_{\cG} Y | \mathfrak{S}_{\cG}(X, Y)$. A \emph{universal sep-strategy for} $\mathbb{G}$ is a map $\mathfrak{S}: \cG \mapsto \mathfrak{S}_{\cG} $ that associates a sep-strategy to every $\cG \in \mathbb G$.
\end{definition}

Given two graphs $\cG, \cH \in \mathbb G$ and a sep-strategy $\mathfrak{S}_{\cH}(\cdot, \cdot)$ for the latter, we define the indicator function
\begin{align*}
    \iota^{\mathfrak{S}}_{\cG,\cH}(X,Y) = \begin{cases}
         1 \qquad & \text{if }  X \centernot{\bowtie}_{\cG} Y | \mathfrak{S}_{\cH}(X, Y) \\
         0 \qquad & \text{if }  X \bowtie_{\cG} Y | \mathfrak{S}_{\cH}(X, Y). 
    \end{cases}
\end{align*}
for any pair of  $\cG$-non-adjacent nodes $(X,Y) \notin \cE_{\cH}$.

\begin{definition}[$\mathfrak{S}$-separation distance]
Let $\mathfrak{S}$ be a universal sep-strategy for $\mathbb G$.
The (normalized) $\mathfrak{S}$\emph{-separation distance} ($\mathfrak{S}$-SD) of  $\cG$ and $\cH \in \mathbb G$ is defined as
\begin{align*}
    d^{\mathfrak{S}}(\cG,\cH) := \frac{1}{N(N-1)} \sum_{(X,Y) \in \mathrm{nadj}(\cH)} \iota^{\mathfrak{S}}_{\cG,\cH}(X,Y),
\end{align*}
where the sum runs over non-adjacent node pairs in $\cH$.
%The symmetric $\mathfrak{S}$-SD of $\cG$ and $\cH$ is defined as
%\begin{align*}
%    d_{\text{sym}}^{\mathfrak{S}}(\cG,\cH) := \tfrac{1}{2} \left( d^{\mathfrak{S}}(\cG,\cH) + d^{\mathfrak{S}}(\cH,\cG) \right).
%\end{align*}
\end{definition}

Thus, the $\mathfrak{S}$-SD measures whether the separating set for $(X,Y)$ on $\cH$ that has been selected according to the strategy $\mathfrak{S}$ remains a separating set on $\cG$, and if this is not the case, a penalty is incurred.

\begin{remark} \label{rem.symmetry}
 Like the adjustment identification distances of \cite{henckel2024adjustment}, separation distances are not symmetric, i.e. $ d^{\mathfrak{S}}(\cG,\cH) \neq d^{\mathfrak{S}}(\cH,\cG) $, as the two graphs play different roles in the distance's computation. We also define a symmetric $\mathfrak{S}$-SD of $\cG$ and $\cH$ by taking the geometric mean of both directions
 \begin{align*}
    d_{\text{sym}}^{\mathfrak{S}}(\cG,\cH) := \tfrac{1}{2} \left( d^{\mathfrak{S}}(\cG,\cH) + d^{\mathfrak{S}}(\cH,\cG) \right).
\end{align*}
Other symmetrizations, e.g. taking the harmonic mean instead of the geometric one, are also possible. 

\end{remark}

%\begin{definition} \label{def.Markov_invariant_strategy}
%    A universal sep-strategy is called \emph{Markov invariant} if for all non-adjacent pairs $(X,Y)$, it holds that $\mathfrak{S}_{\cG}(X, Y) = \mathfrak{S}_{\cG'}(X, Y) $ whenever $\cG \sim \cG'$.
%\end{definition}

Since any DAG is a MAG, any universal sep-strategy for $\{\mathrm{MAGs}\}$ is also a universal sep-strategy for $\{\mathrm{DAGs}\}$. Similarly, any universal sep-strategy $\mathfrak{S}$ for $\{\mathrm{CPDAGs}\}$ defines a universal sep-strategy $\mathfrak{S}'$ for $\{\mathrm{DAGs}\}$ by $\mathfrak{S}'_{\cG} = \mathfrak{S}_{\mathrm{MEC}(\cG)}$ as any separating set for a MEC is a separating set for all its members. We note the subtlety that the converse of this statement need not be true because a sep-strategy $\mathfrak{S}$ defined on DAGs \emph{may not be well-defined} on CPDAGs, in the sense that we may have $\mathfrak{S}_{\cG}(X,Y) \neq \mathfrak{S}_{\cG'}(X,Y)$ for some nodes $X,Y$ even if $\cG \sim \cG'$ are Markov equivalent.

%\begin{lemma} \label{lem.invariance}
%Consider Markov equivalent graphs $\cG \sim \cG'$ and $\cH \sim \cH'$. Then
%\begin{itemize}
%    \item[(i)] $d^{\mathfrak{S}}(\cG,\cH) = d^{\mathfrak{S}}(\cG',\cH)$;
%     \item[(ii)] $d^{\mathfrak{S}}(\cG,\cG') = 0$;
%     \item[(iii)]  if $\mathfrak{S}$ is Markov invariant, then $d^{\mathfrak{S}}(\cG,\cH) = d^{\mathfrak{S}}(\cG',\cH')$. 
%\end{itemize}
  
%\end{lemma}

%\begin{proof}
%    The first and second claim follow from the fact that $\mathfrak{S}_{\cH}(X, Y)$ is a separator in $\cG$ if and only if it is a separator in $\cG'$. The third claim follows from the definition of Markov invariance since in that case $\mathfrak{S}_{\cH}(X, Y) = \mathfrak{S}_{\cH'}(X, Y) $.
%\end{proof}

%Since two graphs in $\mathbb G$ are Markov equivalent if and only if they share the same separations, an $\mathfrak{S}$-SD only depends on the MEC of $\cG_0$ and $\cG$, that is to say $d^{\mathfrak{S}}(\cG_0,\cG) = d^{\mathfrak{S}}(\cG'_0,\cG')$ if $\cG_0 \sim \cG'_0$ and $\cG \sim \cG'$.

Before discussing further theoretical properties of SDs, we present several possible sep-strategies in increasing order of generality. For proofs of all results of this section, we refer to Appendix \ref{app.missing_proofs}.

\paragraph{Parent Separation on DAGs} As for adjustment, the most straightforward idea to separate non-adjacent nodes in a DAG is by using their parents. This yields the symmetric sep-strategy $\mathfrak{S}_{\cG}(X,Y) = \mathfrak{S}_{\cG}(Y,X) = \mathrm{pa}_{\cG}(X \cup Y), \ (X,Y) \notin \mathcal{E}_{\cG}$. However, parent separation is only a universal sep-strategy on DAGs. On MAGs, two non-adjacent nodes might not be separable through their parents, see Figure \ref{fig.counterexample_parent_separation} in Appendix \ref{app.missing_proofs} for a counterexample. In addition, parent separation does not directly extend to CPDAGs, as generically $\mathrm{pa}_{\cG}(X \cup Y) \neq \mathrm{pa}_{\cG'}(X \cup Y)$ for Markov equivalent $\cG \sim \cG'$. We call the corresponding separation distance for DAGs the \emph{parent-SD}.

\paragraph{Ancestor Separation on DAGs} Another universal sep-strategy for DAGs that parallels adjustment is ancestor separation, i.e. $\mathfrak{S}_{\cG}(X,Y) = \left(\mathrm{an}_{\cG}(X \cup Y) \right) \backslash \{ X,Y\}, \ (X,Y) \notin \mathcal{E}_{\cG}$. Again, this is not a valid sep-strategy for MAGs, see Figure \ref{fig.counterexample_parent_separation}, nor does it extend to CPDAGs. We call the corresponding separation distance the \emph{ancestor-SD}.

\paragraph{p-Parent Separation on CPDAGs} %A \emph{possible parent} of a node $X$ in a DAG is a node that, in the associated CPDAG $\mathrm{CPDAG}(\cG)$, is either a parent of $X$ or connected to $X$ by an unoriented edge.
Replacing parents by possible parents yields the sep-strategy for $\mathfrak{S}_{\cG}(X,Y) = \mathrm{ppa}_{\cG}(X \cup Y), \ (X,Y) \notin \mathcal{E}_{\cG}$ whenever $\cG$ is a CPDAG. We prove that possible parents are valid separators in Appendix \ref{app.missing_proofs}, Lemma \ref{lem.possible_parents_separate}.

\paragraph{ZL-Separation on MAGs and their MECs}
To define a universal sep-strategy that remains valid on MAGs, we first recall that a set of nodes $\cS$ is a minimal separator for $(X,Y)$ on a graph $\cG$ if $X \bowtie_{\cG} Y | \cS$ and $X \centernot{\bowtie}_{\cG} Y | \cS'$ for any proper subset $\cS' \subsetneq \cS$.  \citep{van_der_zander2020finding} describe a fast algorithm that returns a minimal separator for non-adjacent nodes $X,Y$ in any MAG $\cG$. We will call this separator the \emph{ZL}-separator $\mathrm{ZL}_{\cG}(X,Y)$, and $\mathfrak{S}_{\cG}(X,Y) = \mathrm{ZL}_{\cG}(X,Y)$ defines a universal sep-strategy for MAGs. Since defining a separator as the output of an algorithm is inconvenient, we will now characterize it differently. The algorithm of \citep{van_der_zander2020finding} computes the ZL-separator by first computing a so-called \emph{nearest separator} to $(X,Y)$ and the refining this nearest separator to a minimal one. A set of nodes $\cS \subset \mathrm{pan}_{\cG}(X\cup Y)$ is called a \emph{nearest separator} relative to $(X,Y)$ on a MAG $\cG$ if (i) $X \bowtie_{\cG} Y | \cS$; and (ii) for every $W \in \mathrm{pan}_{\cG}(X\cup Y) \backslash \{X,Y\}$ and every path $\pi$ connecting $W$ and $Y$ on the moralized graph\footnote{the moralized graph of a MAG $\cH$ is the undirected graph in which nodes $A$ and $B$ share an edge if and only if on $\cH$ there is a path between them on which all nodes except $A$ and $B$ are colliders. We denote it by $\cH^m$.} $(\cG_{\mathrm{pan}_{\cG}(X \cup Y)})^m$ such that $\mathrm{nodes}(\pi) \cap \cS \neq \emptyset$, any other set $\cS' \subset \mathrm{pan}_{\cG}(X \cup Y)$ that separates $(X,Y)$ must also satisfy $\mathrm{nodes}(\pi) \cap \cS' \neq \emptyset$.

%Since defining the ZL-separator as the output of an algorithm is not particularly elegant, we prove the following characterization. 
%While neither minimal nor nearest separators are unique by themselves, we will show the following result which minimal nearest separator of $(X,Y)$.

\begin{lemma} \label{lem.subset_nearest}
Let $X,Y$ be non-adjacent nodes on a MAG $\cG$, and let $\cS$ be a nearest separator for $(X,Y)$. Then any separator $\cS' \subset \cS$ is also nearest for $(X,Y)$.
\end{lemma}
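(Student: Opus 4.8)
The plan is to verify directly that $\cS'$ satisfies the two defining clauses of a nearest separator; the whole argument reduces to a monotonicity observation, so I do not expect a genuine obstacle. Throughout, to avoid a name clash with the lemma's $\cS'$, I rename the universally quantified separator appearing in clause (ii) of the definition to $\cT$.

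First I would dispose of the easy points. Since $\cS'\subseteq\cS\subseteq\mathrm{pan}_{\cG}(X\cup Y)$, the set $\cS'$ lies in the correct ambient set and is therefore eligible to be a nearest separator. Clause (i)—that $X\bowtie_{\cG}Y\mid\cS'$—holds by the standing hypothesis of the lemma, which only concerns subsets $\cS'\subset\cS$ that are themselves separators.

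The substance is clause (ii). I would fix an arbitrary $W\in\mathrm{pan}_{\cG}(X\cup Y)\setminus\{X,Y\}$ and an arbitrary path $\pi$ from $W$ to $Y$ in the moralized graph $M:=(\cG_{\mathrm{pan}_{\cG}(X\cup Y)})^m$ with $\mathrm{nodes}(\pi)\cap\cS'\neq\emptyset$. The single idea is the trivial implication
\[
\mathrm{nodes}(\pi)\cap\cS'\neq\emptyset \;\Longrightarrow\; \mathrm{nodes}(\pi)\cap\cS\neq\emptyset,
\]
which is valid because $\cS'\subseteq\cS$. Hence $\pi$ is a path of exactly the kind to which clause (ii) for $\cS$ applies, and since $\cS$ is nearest, every separator $\cT\subset\mathrm{pan}_{\cG}(X\cup Y)$ of $(X,Y)$ must satisfy $\mathrm{nodes}(\pi)\cap\cT\neq\emptyset$. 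But this is precisely the assertion of clause (ii) for $\cS'$, now established for the arbitrary $W$ and $\pi$ chosen above.

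Having verified clauses (i) and (ii), I would conclude that $\cS'$ is nearest. The only point meriting care is quantifier bookkeeping: clause (ii) ranges over all paths meeting the candidate set, and the crucial fact is that passing from $\cS$ to the smaller $\cS'$ can only shrink this collection of paths, never enlarge it. Monotonicity of intersection under $\cS'\subseteq\cS$ supplies exactly this, which is why the proof is essentially immediate rather than requiring any analysis of the moralized graph or of the separation structure itself.
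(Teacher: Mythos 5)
Your proof is correct and follows essentially the same argument as the paper's: both reduce the claim to verifying clause (ii) of the nearest-separator definition and use the monotonicity $\mathrm{nodes}(\pi)\cap\cS'\neq\emptyset \Rightarrow \mathrm{nodes}(\pi)\cap\cS\neq\emptyset$ for $\cS'\subset\cS$ to invoke the nearestness of $\cS$. Your additional bookkeeping (the ambient-set inclusion and clause (i) holding by hypothesis) is implicit in the paper's opening sentence, so the two proofs coincide in substance.
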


%\begin{proof}
%It suffices to show that $\cS'$ satisfies condition (ii) in the definition of nearest separator. Let $W \in \mathrm{pan}_{\cG}(X\cup Y) \backslash \{X,Y\}$ and let $\pi$ be a path connecting $W$ and $Y$ on the moralized graph $(\cG_{\mathrm{pan}_{\cG}(X \cup Y)})^m$ with $\mathrm{nodes}(\pi) \cap \cS' \neq \emptyset$. Since $\cS' \subset \cS$, we also have $\mathrm{nodes}(\pi) \cap \cS \neq \emptyset$, and since $\cS$ is nearest, we must have $\mathrm{nodes}(\pi) \cap \cT \neq \emptyset$ for any separating set $\cT \subset \mathrm{pan}_{\cG}(X \cup Y)$, establishing property (ii) for $\cS'$.   
%\end{proof}

\begin{theorem} \label{lem.unique_minimal_nearest}
Let $X,Y$ be non-adjacent nodes on a MAG $\cG$. The ZL-separator $\mathrm{ZL}_{\cG}(X,Y)$ is the unique separator that is both minimal and nearest for $(X,Y)$. Moreover, $\mathrm{ZL}_{\cG}(X,Y) = \mathrm{ZL}_{\cG'}(X,Y)$ for Markov equivalent MAGs $\cG \sim \cG'$.    
\end{theorem}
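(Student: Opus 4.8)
The plan is to strip the problem of its MAG-specific wrapping and turn it into a purely graph-theoretic statement about vertex separators, for which uniqueness is classical. First I would invoke the moralization correspondence underlying the algorithm of \citep{van_der_zander2020finding}: writing $M := (\cG_{\mathrm{pan}_{\cG}(X\cup Y)})^m$, a set $\cS \subseteq \mathrm{pan}_{\cG}(X\cup Y)$ $m$-separates $(X,Y)$ in $\cG$ if and only if $\cS$ is an $X$-$Y$ vertex separator in the undirected graph $M$. Since every minimal separator of $(X,Y)$ is already contained in $\mathrm{pan}_{\cG}(X\cup Y)$, both the property ``minimal'' and the property ``nearest'' can be read off $M$, and the whole question reduces to one about the undirected graph $M$.

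Existence of a minimal-and-nearest separator is then immediate from the construction: the ZL-separator is obtained by thinning a nearest separator $\cS_0$ down to a minimal one, so it is a separator with $\mathrm{ZL}_{\cG}(X,Y)\subseteq \cS_0$, and Lemma~\ref{lem.subset_nearest} shows it remains nearest. The substance of the theorem is uniqueness. The key device is to reformulate condition (ii) in terms of reachability. For a separator $\cS$ let $R_Y(\cS)$ be the set of nodes reachable from $Y$ in $M\setminus\cS$, equivalently the nodes of $\mathrm{pan}_{\cG}(X\cup Y)$ that remain $m$-connected to $Y$ given $\cS$. A short contrapositive argument shows that $\cS$ is nearest if and only if $R_Y(\cS)\supseteq R_Y(\cS')$ for every separator $\cS'$: if $\cS$ is nearest, any $W$-$Y$ path avoided by some $\cS'$ must also be avoided by $\cS$, giving $R_Y(\cS')\subseteq R_Y(\cS)$; conversely, if $R_Y(\cS)$ is maximal, a node of $\cS$ lying on a path that some separator avoids would be reachable from $Y$ and yet sit in $\cS$, a contradiction. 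In particular, any two nearest separators have mutually containing, hence equal, reachable sets, so all nearest separators share one fixed set $R$.

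To finish I would use the standard fact that a minimal vertex separator $\cS$ in an undirected graph is exactly the boundary of the $Y$-component of $M\setminus\cS$: every node of $\cS$ is adjacent to that component (else it could be dropped, contradicting minimality), and every neighbour of the component lies in $\cS$; thus $\cS = N_M(C_Y)$ with $C_Y=R_Y(\cS)$. For a separator that is simultaneously minimal and nearest this forces $\cS = N_M(R)$, a set independent of the particular $\cS$. Hence any two minimal-and-nearest separators coincide, and the ZL-separator is the unique one. For the Markov-invariance claim I would note that $\mathrm{pan}_{\cG}(X\cup Y)$ depends only on $\mathrm{MEC}(\cG)$ and that $m$-connection is by definition common to Markov equivalent graphs; since both minimality and the reachability form of nearness are expressed solely through $\mathrm{pan}$ and $m$-separation, ``minimal and nearest'' is a Markov-invariant property, and uniqueness immediately yields $\mathrm{ZL}_{\cG}(X,Y)=\mathrm{ZL}_{\cG'}(X,Y)$ for $\cG\sim\cG'$.

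The hard part will be the first and second steps rather than the final counting. I expect the main obstacle to be pinning down the moralization correspondence in the MAG setting with the correct conditioning on $\mathrm{pan}_{\cG}(X\cup Y)$ (including the fact that minimal separators never leave this set), and then justifying the reachability reformulation of condition (ii) carefully enough that ``nearest'' becomes the clean, separator-independent maximality condition on $R_Y$; once those two reductions are in place, the uniqueness and invariance conclusions follow routinely.
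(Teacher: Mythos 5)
Your proposal is correct, and while the existence step (thin a nearest separator to a minimal one, then apply Lemma~\ref{lem.subset_nearest}) and the Markov-invariance step coincide with the paper's, your uniqueness argument takes a genuinely different route. The paper argues directly on a single path: given two minimal nearest separators $\cS,\cS'$ and $W\in\cS$, minimality of $\cS$ yields an $m$-connecting path whose non-colliders project (via Lemma~3.17 of \citealp{richardson2002ancestral}) to a moral-graph path meeting $\cS$ exactly in $W$; nearestness of $\cS$ forces $\cS'$ to hit the $W$--$Y$ subpath, and if it does so at some $W'\neq W$, nearestness of $\cS'$ forces $\cS$ to hit the even shorter $W'$--$Y$ subpath, contradicting that $W$ was the unique $\cS$-node. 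You instead prove a structural characterization: condition (ii) of nearestness is equivalent to maximality of the set $R_Y(\cS)$ of nodes reachable from $Y$ in $M\setminus\cS$ (your contrapositive argument for both directions is sound), so all nearest separators share one component $R$, and the classical fact that a minimal vertex separator equals the neighborhood of the $Y$-component then pins down any minimal-and-nearest separator as $N_M(R)$. This buys more than the paper's proof: an explicit closed-form description of $\mathrm{ZL}_{\cG}(X,Y)$ and a cleaner invariance argument, since maximal reachability is phrased through $\mathrm{pan}$ and $m$-separation alone. What it costs is a heavier dependency that the paper avoids: you need the full two-way correspondence between $m$-separation by subsets of $\mathrm{pan}_{\cG}(X\cup Y)$ and vertex separation in $M=(\cG_{\mathrm{pan}_{\cG}(X\cup Y)})^m$ (including that $\mathrm{pan}_{\cG}(X\cup Y)$ is ancestrally closed and that minimal $m$-separators lie in $\mathrm{an}_{\cG}(X\cup Y)\subseteq\mathrm{pan}_{\cG}(X\cup Y)$), whereas the paper uses only the one-directional projection of Lemma~3.17. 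You correctly flag this as the main thing to nail down; it is available within the restricted-separation framework of \citep{van_der_zander2020finding}, so the gap is one of citation and bookkeeping rather than substance.
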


We call the separation distance based on ZL-separation the \emph{ZL-SD}. Due to Theorem \ref{lem.unique_minimal_nearest}, the ZL-SD is a  well-defined separation distance on MAGs and their MECs.

\paragraph{MB-enhanced Sep-Strategies}
In this paragraph, we present a way to adapt a universal sep-strategy $\mathfrak{S}$ using the \emph{Markov blanket (MB)} to produce a new universal sep-strategy $\mathrm{MB}(\mathfrak{S})$. This MB-enhancement can be carried out on any of the considered classes of graphs but its primary advantage is for DAGs/CPDAGs where the MB-enhanced strategy can be computed at lower computational cost than the original strategy, at least for sparse graphs.
To define MB-enhancement, we recall that the \emph{Markov blanket} $\mathrm{MB}_{\cG}(X)$ of a node $X$ in a MAG $\cG$ is defined as the smallest set of nodes $\cS$ with the property that $X \bowtie_{\cG} Y | \cS$ for all $Y \notin \cS$. If $\cG$ is a DAG, the Markov blanket of $X$ can be conveniently characterized as the union of its parents $\mathrm{pa}_{\cG}(X)$, its children $\mathrm{ch}_{\cG}(X)$ and its spouses $\mathrm{sp}_{\cG}(X)$, i.e. the nodes that share a common child with $X$. For general MAGs, the Markov blanket can be characterized graphically as well but this characterization is slightly more involved, see Appendix \ref{app.MB_enhanced}. We observe that $\mathrm{MB}_{\cG}(X) = \mathrm{MB}_{\cG'}(X)$ for Markov equivalent $\cG \sim \cG'$, see Lemma \ref{lem.MB_invariance}, so that the Markov blanket is well-defined on MECs.

Now, given a universal sep-strategy $\mathfrak{S}$ on a class of graphs $\mathbb G$, we define the \emph{MB-enhanced sep-strategy} as
\begin{align*}
    \mathfrak{S}_{\cG}(X,Y) =
    \begin{cases}
        \mathrm{MB}_{\cG}(X) &\text{if } Y \notin \mathrm{MB}(X) \\
        \mathfrak{S}_{\cG}(X,Y) &\text{else. } 
    \end{cases}
\end{align*}
for two non-adjacent nodes $X,Y$. Thus, we use the Markov blanket of $X$  as a separator for all nodes except for those that are non-adjacent to $X$ but part of $\mathrm{MB}_{\cG}(X)$ themselves. For these nodes, we employ the initial sep-strategy. %We choose p-parent separation for the exceptional choices to keep the MB-separator Markov invariant, see Appendix REF for a proof. %e.g. $ \mathfrak{S}_{\cG}(X,Y) = \mathrm{pa}_{\cG}(X) \cup \left( \mathrm{ch}_{\cG}(X) \backslash \mathrm{des}(Y) \right)$. 
The advantage of the MB-enhanced strategy over the original one is then that the MB-separator only depends on $X$ and not on $Y$ for 'most' node pairs $(X,Y)$. In DAGs/CPDAGs of bounded node degree, the number of exceptional cases is small compared to the number of nodes $N$ which allows us to leverage the results of \cite{henckel2024adjustment} to achieve a faster implementation of complexity $\mathcal{O}(N^2)$, see below. On MAGs, this speed-up is only retained under stronger assumptions, see Appendix \ref{app.MB_enhanced}.

\paragraph{Extensions to cyclic graphs}
The s/c-metric can be defined for any class of graphs under consideration as long as this class is accompanied by a fitting notion of separation. To extend separation distances to other types of graphs, it is necessary to specify a fitting sep-strategy. For mixed graphs with cycles and $\sigma$-separation, a sep-strategy can be defined by noting that two nodes in such a graph $\cG$ can be $\sigma$-separated by $\cS$ if and only if they can be $m$-separated by $\cS$ in the \emph{acyclification} $\mathrm{ac}(\cG)$ of $\cG$ \citep{bongers_foundations_2021}. Therefore, the problem of defining a sep-strategy can be reduced to defining a sep-strategy in mixed graphs. Moreover, any mixed graph can be projected onto a MAG over the same nodes with equivalent $m$-separations. Therefore, a valid sep-strategy can be defined by projecting $\mathrm{ac}(\cG)$ onto its MAG projection and employing a sep-strategy, such as ZL-separation, for MAGs.

\paragraph{Further properties of $\mathfrak{S}$-SDs}
In this section, we focus on the theoretical properties of separation distances of MAGs which are also inherited by DAGs.
We denote the undirected skeleton of a MAG $\cG$ by $\mathrm{sk}(\cG)$. We also recall that a triple of nodes $(X,Y,Z)$ is called an \emph{unshielded triple} if $Y$ is adjacent to both $X$ and $Z$ but $X$ and $Z$ are non-adjacent. The unshielded triple $(X,Y,Z)$ is an \emph{unshielded collider} if both edges have arrowheads pointing into $Y$. %e.g. \ $X\to Y \leftrightarrow Z$ is an unshielded triple.
We denote the set of all unshielded colliders on a $\cG$ by $\mathcal{U}_c(\cG)$.  Finally, in a MAG $\cG$, a path $\pi$ between nodes $X$ and $Y$ is a \emph{discriminating path} for node $V$ if (i) $\pi$ includes at least three edges; (ii)$V$ is a non-endpoint of $\pi$ and adjacent to $Y$ on $\pi$; and (iii) $X$ and $Y$ are non-adjacent in $\cG$ and every node between $X$ and $V$ on $\pi$ is both a collider on $\pi$ and a parent of $Y$.
 \cite{richardson2002ancestral} showed that two MAGs $\cG,\cH$ over the same nodes are Markov equivalent if and only if (a) they share the same skeleton; (b) they share the same unshielded colliders and (c) if $\pi$ is a discriminating path for node $V$ in both graphs, then $V$ is a collider on $\pi$ in $\cG$ if and only if it is a collider on $\pi$ in $\cH$.

 We have already discussed that two Markov equivalent MAGs satisfy $d^{\mathfrak{S}}(\cG,\cH) = 0$. The next result shows to what extent graphs at zero distance are similar.

\begin{theorem} \label{thm.sep_distances_identification}
Let $\mathfrak{S}$ be a universal sep-strategy for MAGs. If $d^{\mathfrak{S}}(\cG,\cH) = 0$, then 
\begin{enumerate}
    \item[(i)] $\mathrm{sk}(\cG) \subset \mathrm{sk}(\cH)$;
    \item[(ii)] If $(X,Y,Z)$ is an adjacent triple\footnote{that is to say, $Y$ is adjacent to both $X$ and $Z$} on both graphs, and an unshielded collider on $\cH$, then it is an unshielded collider on $\cG$.
    \item[(iii)] if $\pi$ is a discriminating path for node $V$ in both graphs, and $V$ is a collider on $\pi$ in $\cH$, then it is a collider on $\pi$ in $\cG$.
\end{enumerate}
In particular, $d_{\text{sym}}^{\mathfrak{S}}(\cG,\cH) =0$ if and only if $\cG$ and $\cH$ are Markov equivalent.
\end{theorem}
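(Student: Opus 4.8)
The plan is to unwind the definition of the distance: $d^{\mathfrak{S}}(\cG,\cH)=0$ means precisely that for \emph{every} pair $(X,Y)$ non-adjacent in $\cH$, the set $\mathfrak{S}_{\cH}(X,Y)$ selected by the strategy in $\cH$ -- which by construction satisfies $X \bowtie_{\cH} Y \mid \mathfrak{S}_{\cH}(X,Y)$ -- continues to separate $X$ and $Y$ in $\cG$, i.e. $X \bowtie_{\cG} Y \mid \mathfrak{S}_{\cH}(X,Y)$. This single observation drives all three claims. For (i) I would argue contrapositively: if $(X,Y)$ is non-adjacent in $\cH$, the above produces a set $m$-separating $X$ and $Y$ in $\cG$; since adjacent nodes in a MAG can never be $m$-separated by any set, $(X,Y)$ must also be non-adjacent in $\cG$, which is exactly $\mathrm{sk}(\cG)\subset\mathrm{sk}(\cH)$.

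For (ii) and (iii) the engine is the standard characterisation of collider status through separating sets, which I would isolate as two lemmas. \textbf{Lemma A:} for an unshielded triple $(X,Y,Z)$ in a MAG and any set $\cS$ separating $X$ and $Z$, one has $Y$ a collider $\iff Y\notin\cS$ (a collider in the conditioning set opens the length-two path, whereas a non-collider can only be blocked by conditioning on it). \textbf{Lemma B:} for a discriminating path $\pi$ for $V$ between non-adjacent $X,Y$ and any set $\cS$ separating $X$ and $Y$, one has $V$ a collider on $\pi \iff V\notin\cS$. Granting these, (ii) follows thus: by the adjacency hypothesis together with (i), $(X,Y,Z)$ is an unshielded triple in both graphs; since it is a collider in $\cH$, Lemma A gives $Y\notin\mathfrak{S}_{\cH}(X,Z)$; as $\mathfrak{S}_{\cH}(X,Z)$ still separates $X,Z$ in $\cG$, a non-collider at $Y$ in $\cG$ would force $Y\in\mathfrak{S}_{\cH}(X,Z)$ by Lemma A, a contradiction, so $Y$ is a collider in $\cG$. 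Claim (iii) is verbatim the same argument with Lemma B replacing Lemma A (using that $\pi$ is discriminating, hence $X,Y$ non-adjacent, in both graphs).

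For the final equivalence I would note that $d^{\mathfrak{S}}_{\text{sym}}(\cG,\cH)=0$ iff $d^{\mathfrak{S}}(\cG,\cH)=0$ and $d^{\mathfrak{S}}(\cH,\cG)=0$, both summands being non-negative. The backward implication is already recorded in the text (Markov equivalent graphs share all separations, so any $\cH$-separator is a $\cG$-separator, and symmetrically). For the forward implication I apply (i)--(iii) in both orderings: (i) both ways gives $\mathrm{sk}(\cG)=\mathrm{sk}(\cH)$; with a common skeleton every unshielded triple is an adjacent triple in both graphs, so (ii) both ways gives $\mathcal{U}_c(\cG)=\mathcal{U}_c(\cH)$; and (iii) both ways gives that a node is a collider on a common discriminating path in $\cG$ iff it is one in $\cH$. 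These are exactly conditions (a), (b), (c) of \citet{richardson2002ancestral}, whence $\cG$ and $\cH$ are Markov equivalent.

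The main obstacle is Lemma B. Lemma A is elementary, but the discriminating-path statement requires exploiting that every node strictly between $X$ and $V$ on $\pi$ is simultaneously a collider on $\pi$ and a parent of the endpoint $Y$; one must verify that these intermediate colliders are automatically ``activated'' relative to any separating set of $(X,Y)$, so that the blocking status of $\pi$ is controlled solely by $V$. I would establish this by an induction along $\pi$ from $X$ toward $V$, or invoke the separation-theoretic properties of discriminating paths from the ancestral-graph literature \citep{richardson2002ancestral}; the delicate point is to confirm the equivalence for an \emph{arbitrary} separating set $\cS$ as produced by a general strategy $\mathfrak{S}$, rather than only for canonical separators.
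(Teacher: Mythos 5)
Your proposal is correct and follows essentially the same route as the paper's proof: claim (i) by unwinding the definition of $d^{\mathfrak{S}}(\cG,\cH)=0$ plus the fact that adjacent nodes in a MAG cannot be $m$-separated, claim (ii) via the standard characterisation that for an unshielded triple the middle node is a collider iff it lies outside any separating set, claim (iii) via the induction along the discriminating path showing $W_i \in \mathfrak{S}_{\cH}(X,Y)$ and $V \notin \mathfrak{S}_{\cH}(X,Y)$ (which the paper performs inline rather than isolating as your Lemma B), and the final equivalence via the Markov equivalence criteria of \citet{richardson2002ancestral}. The step you flag as the main obstacle is handled in the paper by exactly the induction you sketch, so there is no gap.
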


\begin{figure}[h]
    \centering
    \includegraphics[height=1.6in]{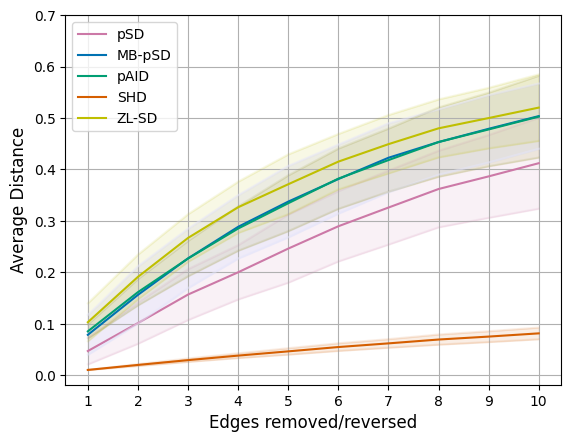}
    \caption{Effect of edge removal and reversal on different distance metrics. Parent-AID and SD are affected much more strongly by such local operations than the SHD.} \label{fig.remove_reverse}
    
\end{figure}

\begin{figure*}[t!]
    \centering
    \begin{subfigure}[t]{0.3\textwidth}
        \centering
        \includegraphics[height=1.6in]{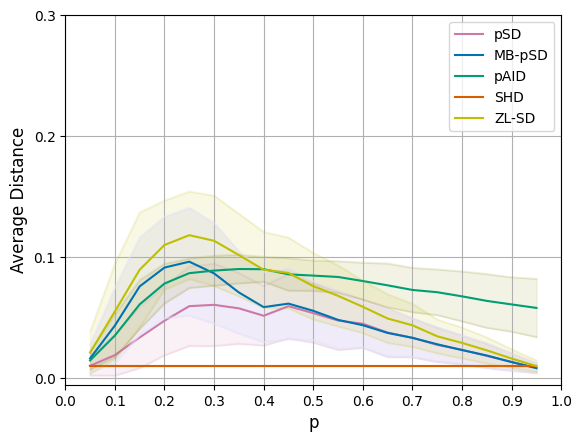}
        %\caption{Lorem ipsum} 
    \end{subfigure}
    \begin{subfigure}[t]{0.3\textwidth}
        \centering
        \includegraphics[height=1.6in]{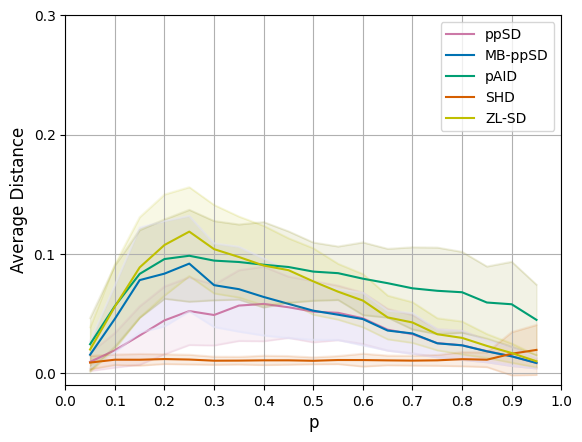}
        %\caption{g} \label{fig.diff_CPDAGs}
    \end{subfigure}
    \begin{subfigure}[t]{0.3\textwidth}
        \centering
        \includegraphics[height=1.6in]{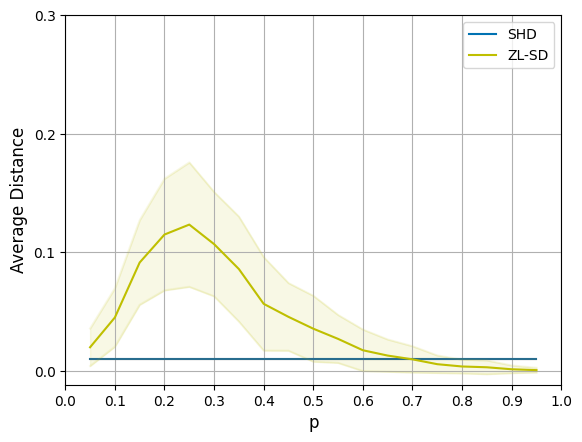}
        %\caption{g} \label{fig.diff_CPDAGs}
    \end{subfigure}
   %\begin{subfigure}[t]{0.4\textwidth}
   %    \centering
   %     \includegraphics[height=1.8in]{Figures/average_diffences_symmetrized_distances_DAGs_25_500.png}
        %\caption{Lorem ipsum} 
   % \end{subfigure}
    %\begin{subfigure}[t]{0.4\textwidth}
    %    \centering
    %    \includegraphics[height=1.8in]{Figures/average_diffences_symmetrized_distances_CPDAGs_25_500_with_1.0.png}
        %\caption{g} \label{fig.diff_CPDAGs}
    %\end{subfigure}

    \caption{Average values of different graph distances applied to an Erdös-Renyi DAG $\cG \sim G(N=25,p)$ and another DAG $\cH$ which is obtained from $\cG$ by randomly deleting one edge and reversing one edge orientation. The plot on the left shows the distance values for the DAGs themselves while the plot in the middle compares their CPDAGs for increasing values of $p$. For mixed graphs, the only available distances are the ZL-SD and the SHD which are depicted in the rightmost plot. We ran $100$ experiments per parameter.} \label{fig.average_values}
\end{figure*}

\paragraph{Complexity of Implementations}
The computational complexity of our algorithms may depend on the sep-strategy and we provide detailed computations in Appendix \ref{app.algorithms}. In general, the distance computation consists of two steps: a separator computation step in the graph $\cH$ and a verification step on the graph $\cG$ that checks whether the separators $\mathfrak{S}_{\cH}(X,Y)$ found on $\cH$ remain separators on $\cG$. %Computing parents and ancestors for any node $X$ in a DAG $\cG$ with $N$ nodes and $M$ edges has computational complexity $\mathcal{O}(N+M)$ where $N$ is the number of nodes and $M$ is the number of edges in $\cG$, computing all parent and ancestor separators has complexity $\mathcal{O}(N\cdot(N+M))$. 
For non-enhanced sep-strategies excluding ZL-separation, the separator computation is $\mathcal{O}(N^3)$ or lower, for ZL-separation in a MAG this step is $\mathcal{O}(N^4)$ ($\mathcal{O}(N^3)$ on sparse graphs) \citep{van_der_zander2020finding}. The verification step can be executed with worst-case complexity $\mathcal{O}(N^2)$ ($\mathcal{O}(N)$ on sparse graphs) \citep{van2014constructing} per node pair $(X,Y)$ and thus $\mathcal{O}(N^4)$ in total (sparse: $\mathcal{O}(N^3)$). This is one degree slower than the complexity $\mathcal{O}(N^3)$ (sparse: $\mathcal{O}(N^2)$) of the algorithm that computes the parent- or ancestor-AID in \citep{henckel2024adjustment}. For MB-enhanced strategies, however, we can mimic the ideas of \cite{henckel2024adjustment} to achieve a lowered complexity of $\mathcal{O}(N^2)$ on sparse graphs. This is because, thanks to the \emph{Bayes-Ball algorithm} \citep{geiger1990d, shachter1998bayes}, see also \citep[Appendix D]{henckel2024adjustment}, we can verify many separation statements while looping over only one node $X$ instead over pairs of nodes $(X,Y)$. The loop over the second node $Y$ only has to be entered in exceptional cases. The number of exceptional cases is bounded by the size of the Markov blanket which is in turn bounded by $d^2$, where $d$ is the maximal node degree of the graph. Therefore, on sparse graphs with $d=\text{const.}$, we reach a complexity of $\mathcal{O}(N^2)$ similar to the one for AIDs. A similar trick can be applied in the separator computation step to achieve a complexity of $\mathcal{O}(N^2)$. Empirically, computing the MB-enhanced parent-SD on $100$ pairs of Erdös-Renyi graphs with $N=1000$ nodes and an expected number of $10N$ edges, took $0.59s$ on average on an Apple M1max 64GB chip as opposed to $4.35s$ for the non-enhanced parent-SD.

\section{EMPIRICAL RESULTS} \label{sec.experiments}

In Figure \ref{fig.average_values}, we draw an Erdös-Renyi DAG $G \sim G(N,p)$ from which we then randomly remove one edge and reverse the orientation of another edge to obtain a second graph $\cH$. We then compare the normalized distances of these two graphs as well as the distances of their CPDAGs. We conduct a similar experiment for mixed graphs which we generate in an Erdös-Renyi fashion with edge density $p$ and a probabilty of $0.25$ that a given edge is bidirected. Among SDs, the ZL-SD reacts the most strongly to the edge deletion/reversal. Parent-SD, MB-enhanced parent-SD and parent-AID both also witness the difference in graphs much more clearly than the SHD. On average, the parent- and pparent-SD seem to behave  more conservative than their adjustment analog and somewhat interpolate between the SHD and the parent-AID. For mixed graphs/MAGs the only available distances are the SHD and the ZL-SD which exhibit a clear difference in values for sparser graphs. Interestingly, for very dense graphs the ZL-SD even drops below the difference in SHD which is likely due to the fact that in such graphs barely any nodes can be separated anymore. In Figure \ref{fig.remove_reverse}, we plot the change in distance metrics to a base graph after we have removed an edge and reversed another one $k$-times, $k=1,\dots,10$. The base graph is a randomly drawn Erdös-Renyi DAG $G \sim G(25,0.25)$ and the removed and reversed edges are drawn uniformly random from existing edges. We have repeated this experiment $100$ times and Figure \ref{fig.remove_reverse} shows the averages across these runs. Parent-AID and all SDs are more strongly affected by these removal and reversal operations than the SHD. Remarkably, the Markov enhanced parent-SD behaves very similar to the parent-AID. In Appendix \ref{app.additional_experiments}, we provide further experiments on the correlation of different distances and on s/c-metrics.

In summary, our experiments and theoretical considerations demonstrate that AID, SDs and the SHD measure different notions of similarity of causal graphs, and the choice of metric should depend on the ultimate goal of the causal discovery effort.

%\paragraph{Example Application 1: GES on the Sachs Dataset}

%\paragraph{Example Application: Measuring Robustness to Hyperparameter Perturbation}
%We end this section with an illustration of how separation-based distance measures  can be integrated into causal discovery evaluation even when there is no ground-truth graph available.ADD

\begin{table*}[t!]
\centering
\scalebox{0.8}{
\begin{NiceTabular}{|l||c:c|c|c|}
\midrule
 & \multicolumn{2}{c}{separation-based} & adjustment-based & structure-based \\
 \midrule
 & s/c-metric & SD & AID & SHD \\
 \midrule \midrule
 compares? & \makecell{all \\ separation statements} & \makecell{selected \\ separation statements} & \makecell{selected \\ adjustment sets} & \makecell{edges (presence \\ and orientation)} \\
 \midrule
 local? & no & no & no & yes \\
 \midrule
 variants & \makecell{FP-/FN-rates; \\ randomly chosen \\ sep-statements;} & \makecell{symmetrization; \\ different \\ sep-strategies;} & \makecell{symmetrization; \\ different \\ adjustment \\ strategies;} & \makecell{edge or orientation \\ FP-/FN-rates;} \\
 \midrule
 applicable to & \makecell{DAGs, MAGs, \\ CPDAGS, PAGs \\ cyclic graphs;} & \makecell{DAGs, MAGs, \\ CPDAGS, PAGs \\ cyclic graphs;} & \makecell{DAGs, CPDAGS, \\ further extensions \\ possible;} &\makecell{all graphs;}\\
 \midrule
  \makecell{scales to \\ large graphs?} & no & yes & yes & yes \\
 \midrule
 \makecell{particularly \\ recommended \\ for} & small graphs; & \makecell{MECs, \\ graph classes \\ beyond DAGs;} &  \makecell{evaluation of \\ causal discovery \\ for downstream \\ effect estimation;} & \makecell{all graphs;} \\
 \midrule
\end{NiceTabular}}
\caption{Summary of measures of comparison for causal graphs.}\label{table.comparison}
\end{table*}

\section{CONCLUSION} \label{sec.conclusion}

We have introduced new separation-based graph distances that allow to quantify how similar two causal graphs are in terms of their implied separation statements. Many of these metrics are fast to compute, scalable, and applicable to DAGs, their Markov equivalence classes (CPDAGs) as well as mixed graphs that incorporate bidirected edges for hidden confounding. We have compared them with other metrics through toy examples and empirical experiments. We summarize the properties of the available distance measures in Table \ref{table.comparison}. Our work is in line with other recent attempts to provide more comprehensive tools to evaluate causal discovery algorithms such as \citep{henckel2024adjustment,faller_self-compatibility_2023,ramsey2024choosing}. Considering these recent developments regarding evaluation together with new proposals for more realistic data simulation \citep{gamella2024causal,andrews2024better,ormaniec2024standardizing} and the use of more real-world data , we believe that it would be worthwhile to conduct an extensive re-evaluation of popular causal discovery methods across a broad range of measures and data sources in future work.

\paragraph{Code availability}

An implementation of the distance measures introduced in this work is available in the repository \href{https://github.com/JonasChoice/sep_distances}{https://github.com/JonasChoice/sep\textunderscore distances}.

\paragraph{Acknowledgements}
JW and JR received funding from the European Research Council (ERC) Starting Grant CausalEarth under the European Union’s Horizon 2020 research and innovation program (Grant Agreement No. 948112). JW also received support from the German Federal Ministry of Education and Research (BMBF) as part of the project MAC-MERLin (Grant Agreement No. 01IW24007). The authors used resources of the Deutsches Klimarechenzentrum (DKRZ) granted by its Scientific Steering Committee (WLA) under project ID bd1083.
We thank Leonard Henckel for valuable discussions and his suggestion to use Markov blankets as separators. We also thank Simon Bing for proof-reading a previous version of this manuscript.

\bibliography{Metrics}
\newpage
\appendix
%\
\thispagestyle{empty}
\begin{center}
\hrule height 3pt
\vspace{0.5cm}
\Large{\textbf{Supplementary Materials}}
\vspace{0.5cm}
\hrule height 1pt
\end{center}
\vspace{0.8cm}

%\title{Separation-based Distance Measures for Causal Graphs: Supplementary Materials}

\section{RELATED WORK} \label{sec.related_work}
The articles \citep{acid_searching_2003,tsamardinos_max-min_2006} introduce the SHD, \citep{peters_structural_2014} introduce the SID. The latter is generalized to a broader class of adjustment identification distances in \citep{henckel2024adjustment}. \cite{liu_stability_2010} and \cite{biza_tuning_2020} define algorithms that judge causal graphs based on their predictive abilities for the purpose of hyperparameter selection. In addition, several works address method evaluation and causal graph falsification. \citet{faller_self-compatibility_2023} propose a criterion for evaluating causal discovery algorithms in the absence of a ground truth which quantifies the compatibility of causal graphs that were inferred over different subsets of variables. \citet{pitchforth_proposed_2013} review methods to validate expert-elicited Bayesian networks. \citet{machlanski_robustness_2023} discuss causal model evaluation in the context of causal effect estimation, and \citet{eulig_toward_2023} develop a permutation-based test for causal graph falsification. \cite{ramsey2024choosing} propose a statistical test for the validity of the Markov property of a distribution on a causal graph. \cite{faltenbacher2025internal} investigate how coherently the separations of the output graph of a constraint-based causal discovery method reflect the conditional independencies that where measured during its run. \cite{faller2025different} investigate how to use redundant test results to correct errors in the learned graph. 

\section{PROOFS} \label{app.missing_proofs}

\begin{lemma} \label{lem.possible_parents_separate}
Let $\cG$ be a DAG. For any pair of non-adjacent nodes $(X,Y)$, the set of possible parents $\mathrm{ppa}_{\cG}(X \cup Y)$ is a $d$-separating set.     
\end{lemma}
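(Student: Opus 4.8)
The goal is to show that $\cS := \mathrm{ppa}_{\cG}(X\cup Y)$ $d$-blocks every path between the non-adjacent nodes $X$ and $Y$. The plan is to first record the structural features of $\cS$ that make the claim plausible, and then to run a shortest-path contradiction argument. Two observations set the stage. First, $\mathrm{pa}_{\cG}(X\cup Y)\subseteq\cS$: every genuine parent $P\to X$ is drawn either as $P\to X$ or as $P-X$ in $\mathrm{CPDAG}(\cG)$, and in both cases $P$ is a possible parent. Second, any node of $\cS\setminus\mathrm{pa}_{\cG}(X\cup Y)$ is necessarily a \emph{child} of $X$ or $Y$ reached by an edge that is \emph{undirected} in $\mathrm{CPDAG}(\cG)$, i.e.\ a reversible edge; for such a child $C$ of $X$ there can be no node $Z$ non-adjacent to $X$ with $Z\to C$, since $X\to C\leftarrow Z$ would then be an unshielded collider and hence compelled, contradicting the reversibility of $X-C$. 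This ``no unshielded collider through a reversible edge'' property is the tool that will tame collider activation.

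Building on the fact already used for parent separation --- that $\mathrm{pa}_{\cG}(X\cup Y)$ alone $d$-separates $X$ and $Y$ --- I would assume for contradiction that some path $\pi$ between $X$ and $Y$ is active given $\cS$, and take $\pi$ of minimal length. Consider the edge incident to $X$ on $\pi$. If it points into $X$, its neighbour is an actual parent, hence lies in $\cS$ and is a non-collider on $\pi$, so $\pi$ is blocked --- a contradiction. Thus the first edge must be of the form $X\to V_1$ with $V_1$ a child of $X$, and symmetrically at the $Y$-end. If such a child $V_1$ is itself a non-collider lying in $\cS$ the path is again blocked, so the only way $\pi$ can survive near its endpoints is through children reached by \emph{compelled} out-edges (which are not in $\cS$) or through \emph{colliders}.

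The crux, and the main obstacle, is collider activation: unlike a pure parent set, $\cS$ contains descendants of $X$ and $Y$ (the reversible children), so conditioning on $\cS$ can in principle open a collider. Here the reversibility property is decisive. If a reversible child $C$ of $X$ sits in $\cS$ and is a collider $X\to C\leftarrow Z$ on $\pi$, then $Z$ must be adjacent to $X$ (otherwise $X\to C\leftarrow Z$ is an unshielded collider, forcing $X-C$ to be compelled); the edge between $X$ and $Z$ then lets one reroute $\pi$ into a strictly shorter active connection between $X$ and $Y$, contradicting minimality. For colliders further along $\pi$, activation is witnessed by a descendant $D\in\cS$; since every element of $\cS$ is adjacent to $X$ or $Y$, tracing the directed activating path $C\to\cdots\to D$ together with this adjacency either produces a direct $X$--$Y$ adjacency (excluded by hypothesis) or, again, a shorter active path. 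Carrying out this rerouting while verifying that activeness is preserved in each collider/non-collider sub-case is the laborious heart of the argument; an alternative that trades the dynamic collider bookkeeping for a static one is to invoke the moralization criterion and instead show that $X$ and $Y$ are disconnected by $\cS$ in the moral graph of $\cG$ restricted to $\mathrm{An}(\{X,Y\}\cup\cS)$, where collider activation is encoded once and for all by the marriage edges.
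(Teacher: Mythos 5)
Your preparatory observations are correct and well chosen: $\cS = \mathrm{ppa}_{\cG}(X\cup Y)$ consists of the actual parents of $X$ and $Y$ together with children reached by edges that are undirected in $\mathrm{CPDAG}(\cG)$; every element of $\cS$ is adjacent to $X$ or $Y$; and a reversible child $C$ of $X$ admits no edge $Z \to C$ with $Z$ non-adjacent to $X$, since that would create a compelled unshielded collider. The endpoint analysis (first edge must be $X \to V_1$, last edge $V_k \leftarrow Y$, so a collider exists) also matches the paper's opening moves. But the proof is not complete, and the two concrete claims you make about the part you yourself call ``the laborious heart'' fail as stated. First, rerouting a collider $C$ activated by a descendant $D \in \cS$ through the directed path $C \to \cdots \to D$ and the edge between $D$ and $X$ does \emph{not} yield a strictly shorter connection: the activating path can be arbitrarily long, so the rerouted walk may be longer than $\pi$ (and need not even be a path), and plain edge-count minimality gives no contradiction; a more delicate minimization or a different contradiction is needed. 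Second, the claimed dichotomy ``direct $X$--$Y$ adjacency or shorter active path'' is unsubstantiated --- it is unclear how an $X$--$Y$ adjacency would ever arise from this configuration, and it is not the contradiction that actually occurs.

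The deeper obstruction is that, because $\cS$ contains reversible children, which \emph{are} descendants of $X$ or $Y$ in $\cG$, no contradiction can be reached by exhibiting a directed cycle inside $\cG$ itself. Consider the complete triangle $X \to V$, $V \to D$, $X \to D$: its CPDAG is fully undirected, so $D \in \mathrm{ppa}_{\cG}(X)$ even though $D$ is a descendant of $X$ in $\cG$. Hence ``directed path from $X$ to $D$ plus $D \in \mathrm{ppa}_{\cG}(X)$'' is consistent in general, and what must actually be shown is that in the specific configuration on the active path the relevant edge is compelled toward $X$ (or $Y$) in \emph{every} member of the MEC --- an orientation-propagation argument across the equivalence class. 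This is exactly where the paper's proof does its real work: it passes to the CPDAG, invokes Zhang's definite-status-path criterion, isolates the first and last colliders, replaces the activating node by the \emph{first} node $P$ on the descending path lying in $\mathrm{ppa}(Y)$ (resp.\ $P'$ in $\mathrm{ppa}(X)$), and runs a shielded/unshielded case analysis to force $P \to Y$ and $P' \to X$ in every class member, concatenating everything into a directed cycle in some member DAG. Your sketch never engages with this MEC-level step; the moralization alternative you name at the end is likewise only gestured at and would require the same structural analysis of marriage edges among reversible children. As it stands, the proposal is a plausible plan with correct lemmas, but the decisive argument is missing and the shortcuts proposed in its place do not go through.
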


To prove Lemma \ref{lem.possible_parents_separate} we recall that separations on CPDAGs can be nicely handled with \emph{definite status paths}. Fix a CPDAG $\cH$ and a path $\pi = (\pi(1),\dots, \pi(n))$ on $\cH$. A non-collider $\pi(i)$ is called a \emph{definite non-collider} on $\pi$ if the triple $(\pi(i-1),\pi(i),\pi(i+1))$ is unshielded or if one of its adjacent edges is oriented away from $\pi(i)$ in $\cH$. An arbitrary node on $\pi(j)$ on $\pi$ is said to be of \emph{definite status on} $\pi$ if it is a definite non-collider or a collider. Finally the path $\pi$ is called a \emph{definite status path} if all of its non-endpoint nodes are of definite status. A definite status path $\pi$ from $X$ to $Y$ is called $d$-blocked by a set of nodes $\cS \subset \cV\backslash \{X,Y\}$ if one of its non-colliders is in $\cS$ or if one of its colliders does not have any descendants (including itself) in $\cS$. \citet{zhang_causal_2008} proved that $X$ and $Y$ are $d$-separated by a set $\cS$ in one (every) DAG in $\mathrm{DAG}(\cH)$ if and only if every definite status path on $\cH$ is $d$-blocked by $\cS$.

\begin{proof}[Proof of Lemma \ref{lem.possible_parents_separate}]
 We consider $\cH := \mathrm{CPDAG}(\cG)$, consider the pair of non-adjacent nodes $(X,Y)$ and the set of possible parents  $\cS := \mathrm{ppa}_{\cH}(X \cup Y)$. By the previous considerations, we need to show that $\cS$ $d$-blocks any definite status path between $X$ and $Y$. Let $\pi = (\pi(1),\pi(2),\dots,\pi(n)), n \geq 3$ be such a definite status path with $\pi(1) =X$ and $\pi(n) = Y$. We will need to consider several different cases. First, if $\pi(1) \leftarrow \pi(2)$ or $\pi(1) \doublemarked \pi(2)$ and similarly if $\pi(n-1) \rightarrow \pi(n)$ or $\pi(n-1) \doublemarked \pi(n)$, then $\cS$ contains a non-collider of $\pi$ and hence $\pi$ is $d$-blocked. \\
 Therefore, we now assume that $\pi(1) \rightarrow \pi(2)$ and $\pi(n-1) \leftarrow \pi(n)$. Then, there must be a collider on $\pi$. Next, we assume that $\cS$ $d$-unblocks $\pi$ and derive a contradiction.
\\ \\
\textbf{Case 1}: We first consider the case, where there is exactly one collider $C = \pi(i), 1<i<n$ on $\pi$. Then every edge $(\pi(k-1),\pi(k))$ must be directed towards $C$, i.e. $\pi(k-1) \rightarrow \pi(k)$ for $k\leq i$ and $\pi(k) \leftarrow \pi(k+1)$ for $k\geq i$ as any other orientation would generate an additional collider on $\pi$ in any DAG $\cG' \in \mathrm{DAG}(\cH)$. Since $\cS$ $d$-unblocks $\pi$, there must be a possible parent $P \in \mathrm{ppa}_{\cH}(X) \cap \mathrm{des}(C)$. Then the path $X = \pi(1) \to \pi(2) \to \dots C \to \dots P \doublemarked X$ would induce a cycle in some DAG $\cG' \in \mathrm{DAG}(\cH)$. The case $P \in \mathrm{ppa}_{\cH}(Y) \cap \mathrm{des}(C)$ is analogues.
\\ \\
\textbf{Case 2}: Now we assume that there is more than one collider on $\pi$. Let $C = \pi(i_{C})$ be the first and $C' = \pi(i_{C'})$ be the last collider on the path , so that in particular $i_C < i_{C'}$. By the same reasoning as in Case 1, the subpath $(\pi(1),\dots, \pi(i_C)$ must be right-directed and the subpath $(\pi(i_{C'}),\dots, \pi(n))$ must be left-directed. If $\cS$ $d$-unblocks $\pi$, then there must exist $P \in \cS \cap \left(\mathrm{des}(C) \cup \{C\} \right)$ and $P' \in \cS \cap \left(\mathrm{des}(C') \cup \{C'\} \right)$ with descending paths $\xi$ and $\xi'$ from $C$ to $P$ and $C'$ to $P'$ respectively. If $P \in \mathrm{ppa}_{\cH}(X)$, then the concatenation  of $(\pi(1),\dots, \pi(i_C))$, $\xi$ and the edge $P \doublemarked X$ (or $P \to X$) would yield a cycle in some $\cG' \in \mathrm{DAG}(\cH)$. By the same argument, $P' \in \mathrm{ppa}_{\cH}(Y)$ would induce a cycle in some $\cG' \in \mathrm{DAG}(\cH)$. Therefore, we must have $P \in \mathrm{ppa}_{\cH}(Y)$ and $P' \in \mathrm{ppa}_{\cH}(X)$. We now assume that $P$ is the first node on the descending path $\xi$ that is in $\mathrm{ppa}_{\cH}(Y)$ (otherwise, we replace $P$ by the first node  $P^*$ on $\xi$ with this property). Similarly we assume that $P'$ is the first node on the descending path $\xi'$ that is in $\mathrm{ppa}_{\cH}(X)$. Consider the path $\Pi$ obtained by concatenating in this order the path $(\pi(1),\dots, \pi(i_C))$, the path $\xi = (\xi(1),\dots,\xi(k))$, the edge $(P,Y)$, the inverse of the path $(\pi(i_{C'}),\dots, \pi(n))$, the path $\xi' = (\xi'(1),\dots,\xi'(l))$ and the edge $(P',X)$. We will show the contradictory conclusion that $\Pi$ must be a cycle.

If the edge $(P,Y)$ was left-directed $P \leftarrow Y$ in \emph{some} $\cG' \in \mathrm{DAG}(\cH)$, then $\xi'(l-1)\to P = \xi(l) \leftarrow Y$ would be a collider and we show that this is contradictory. Indeed, if this collider was unshielded, then $P$ would be a child of $Y$ in every $\cG' \in \mathrm{DAG}(\cH)$ which contradicts the assumption $P \in \mathrm{ppa}_{\cH}(Y)$. However, if the collider was shielded, there would have to be an edge $(\xi(l-1),Y)$ and since $P$ was the first node on $\xi$ that is a possible parent of $Y$, this edge would have to be directed as $\xi(l-1) \leftarrow Y$. But then the edge $(P,Y)$ must be directed as $P \leftarrow Y$ in \emph{every} $\cG'' \in \mathrm{DAG}(\cH)$ to avoid the cycle $Y \to \xi(l-1) \to P \to Y$. But this again contradicts the assumption $P \in \mathrm{ppa}_{\cH}(Y)$. Consequently, the edge $(P,Y)$ must be right-directed $P \to Y$ in \emph{every} $\cG' \in \mathrm{DAG}(\cH)$. We can repeat the same line of reasoning for the edge $(P',X)$ to see that this edge needs to be right-directed as well. But then we have achieved our final contradiction that the path $\Pi$ must be circular.
\\ \\
In both cases we have derived the desired contradiction and hence $\cS$ must $d$-block $\pi$.
%Our goal is to show that $\Pi$ induces a cycle on some $\cG' \in \mathrm{DAG}(\cH)$ which is the case if the two inserted edges are directed as $P\to Y$ and $P'\to X$. In fact, we show that 
 
\end{proof}

\begin{lemma} \label{lem.subset_nearestv2}
Let $X,Y$ be non-adjacent nodes on a MAG $\cG$, and let $\cS$ be a nearest separator for $(X,Y)$. Then any separator $\cS' \subset \cS$ is also nearest for $(X,Y)$.
\end{lemma}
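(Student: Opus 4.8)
The plan is to check directly that the smaller set $\cS'$ inherits both defining properties (i) and (ii) of a nearest separator from $\cS$, the only real content being a monotonicity observation about condition (ii). For property (i): since $\cS' \subset \cS \subset \mathrm{pan}_{\cG}(X\cup Y)$, the set $\cS'$ lies in $\mathrm{pan}_{\cG}(X\cup Y)$ as the definition requires, and the hypothesis that $\cS'$ is a separator for $(X,Y)$ is literally the statement $X \bowtie_{\cG} Y \mid \cS'$. So (i) holds for $\cS'$ with nothing further to prove.

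For property (ii), I would fix an arbitrary $W \in \mathrm{pan}_{\cG}(X\cup Y)\setminus\{X,Y\}$ and an arbitrary path $\pi$ connecting $W$ and $Y$ in the moralized graph $(\cG_{\mathrm{pan}_{\cG}(X\cup Y)})^m$ with $\mathrm{nodes}(\pi)\cap\cS' \neq\emptyset$, and show that every separator $\cT \subset \mathrm{pan}_{\cG}(X\cup Y)$ of $(X,Y)$ satisfies $\mathrm{nodes}(\pi)\cap\cT\neq\emptyset$. The crucial point is that the inner universal quantifier in condition (ii) ranges over \emph{all} separators contained in $\mathrm{pan}_{\cG}(X\cup Y)$ and makes no reference to the particular separator whose nearest-ness is being tested; only the selection hypothesis $\mathrm{nodes}(\pi)\cap(\,\cdot\,)\neq\emptyset$ does. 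Because $\cS'\subset\cS$, any node of $\pi$ lying in $\cS'$ also lies in $\cS$, so $\mathrm{nodes}(\pi)\cap\cS' \neq\emptyset$ forces $\mathrm{nodes}(\pi)\cap\cS\neq\emptyset$. Applying property (ii) for the nearest separator $\cS$ to this same pair $(W,\pi)$ then yields precisely the desired conclusion that every separator $\cT\subset\mathrm{pan}_{\cG}(X\cup Y)$ meets $\pi$. Hence $\cS'$ satisfies (ii) as well and is therefore nearest.

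I expect essentially no obstacle here: the argument reduces to the observation that shrinking the separator from $\cS$ to $\cS'$ only shrinks the family of pairs $(W,\pi)$ over which the (identical) conclusion of (ii) must be verified, so nearest-ness propagates downward along separating subsets. The one place where care is needed is the quantifier bookkeeping — recognizing that the ``any other set'' appearing in the definition is quantified uniformly over all separators in $\mathrm{pan}_{\cG}(X\cup Y)$, independently of $\cS$ or $\cS'$, which is exactly what makes condition (ii) anti-monotone in the separating set.
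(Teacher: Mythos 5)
Your proof is correct and matches the paper's argument essentially line for line: both reduce the claim to condition (ii), use $\cS'\subset\cS$ to upgrade $\mathrm{nodes}(\pi)\cap\cS'\neq\emptyset$ to $\mathrm{nodes}(\pi)\cap\cS\neq\emptyset$, and then invoke nearest-ness of $\cS$ on the same pair $(W,\pi)$ to conclude $\mathrm{nodes}(\pi)\cap\cT\neq\emptyset$ for every separator $\cT\subset\mathrm{pan}_{\cG}(X\cup Y)$. Your explicit verification of condition (i) and the remark on the quantifier structure are fine additions but introduce nothing beyond the paper's proof.
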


\begin{proof}
It suffices to show that $\cS'$ satisfies condition (ii) in the definition of nearest separator. Let $W \in \mathrm{pan}_{\cG}(X\cup Y) \backslash \{X,Y\}$ and let $\pi$ be a path connecting $W$ and $Y$ on the moralized graph $(\cG_{\mathrm{pan}_{\cG}(X \cup Y)})^m$ with $\mathrm{nodes}(\pi) \cap \cS' \neq \emptyset$. Since $\cS' \subset \cS$, we also have $\mathrm{nodes}(\pi) \cap \cS \neq \emptyset$, and since $\cS$ is nearest, we must have $\mathrm{nodes}(\pi) \cap \cT \neq \emptyset$ for any separating set $\cT \subset \mathrm{pan}_{\cG}(X \cup Y)$, establishing property (ii) for $\cS'$.   
\end{proof}

\begin{theorem} \label{lem.unique_minimal_nearestv2}
Let $X,Y$ be non-adjacent nodes on a MAG $\cG$. The ZL-separator $\mathrm{ZL}_{\cG}(X,Y)$ is the unique separator that is both minimal and nearest for $(X,Y)$. Moreover, $\mathrm{ZL}_{\cG}(X,Y) = \mathrm{ZL}_{\cG'}(X,Y)$ for Markov equivalent MAGs $\cG \sim \cG'$.    
\end{theorem}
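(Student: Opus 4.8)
The plan is to split the statement into three parts: that the ZL-separator is both minimal and nearest (existence), that these two properties together determine a \emph{unique} set (uniqueness), and that this unique set is invariant under Markov equivalence. The first part is essentially bookkeeping, the second is the real content, and the third then follows formally.

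For existence, recall that the algorithm of \citep{van_der_zander2020finding} first produces a nearest separator $\cN$ for $(X,Y)$ and then prunes it to a minimal separator $\mathrm{ZL}_{\cG}(X,Y) \subseteq \cN$. The pruned set is minimal by construction, and since it is a separator contained in the nearest separator $\cN$, Lemma~\ref{lem.subset_nearest} immediately yields that it is again nearest. Hence $\mathrm{ZL}_{\cG}(X,Y)$ is both minimal and nearest.

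For uniqueness I would pass to the moralized graph $H := (\cG_{\mathrm{pan}_{\cG}(X\cup Y)})^m$, in which $m$-separation of $(X,Y)$ by subsets of $\mathrm{pan}_{\cG}(X\cup Y)$ coincides with ordinary vertex separation. For a separator $\cS$, write $C_Y(\cS)$ for the connected component of $Y$ in $H \setminus \cS$. The key step is to reformulate condition (ii): unwinding its quantifiers shows that $\cS$ is nearest if and only if $C_Y(\cS') \subseteq C_Y(\cS)$ for every separator $\cS'$, i.e. $C_Y(\cS)$ equals the (necessarily unique) maximal $Y$-component $C_Y^{\ast}$. Granting this, if $\cS$ is nearest then $C_Y(\cS) = C_Y^{\ast}$, which forces $\cS$ to contain the vertex boundary $N(C_Y^{\ast})$ of $C_Y^{\ast}$ in $H$ (any boundary vertex left out of $\cS$ would rejoin the $Y$-component), and $N(C_Y^{\ast})$ is itself a separator. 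Minimality of $\cS$ then excludes any strict containment, so $\cS = N(C_Y^{\ast})$. Since $C_Y^{\ast}$, and hence its boundary, is uniquely determined, the minimal-and-nearest separator is unique and must coincide with $\mathrm{ZL}_{\cG}(X,Y)$.

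Finally, for Markov-equivalence invariance I would observe that the entire characterization $\cS = N(C_Y^{\ast})$ refers only to MEC-invariant data: the set $\mathrm{pan}_{\cG}(X\cup Y)$ depends only on $\mathrm{MEC}(\cG)$ by definition, all $m$-separation statements are shared by Markov equivalent MAGs, and adjacency in $H$ is equivalent to non-separability of two nodes by subsets of $\mathrm{pan}_{\cG}(X\cup Y)$ — itself a separation statement — so $H$ is literally the same graph for $\cG$ and any $\cG' \sim \cG$. Consequently $C_Y^{\ast}$ and $N(C_Y^{\ast})$ agree across the MEC, giving $\mathrm{ZL}_{\cG}(X,Y) = \mathrm{ZL}_{\cG'}(X,Y)$. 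The main obstacle I anticipate is the uniqueness step, specifically extracting the clean component-monotonicity statement from the path-based definition of \emph{nearest}, and verifying the moral-graph/$m$-separation correspondence carefully enough that $N(C_Y^{\ast})$ is genuinely a separator that does not contain $X$.
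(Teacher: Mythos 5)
Your proposal is correct, but your core uniqueness step takes a genuinely different route from the paper's. The existence and invariance parts essentially coincide: the paper likewise obtains nearestness of the pruned algorithmic output from Lemma~\ref{lem.subset_nearest}, and it settles invariance in one line by observing that minimality and nearestness are themselves Markov-invariant notions (your version, which spells out the MEC-invariance of $\mathrm{pan}_{\cG}(X\cup Y)$ and of the moral graph, is the same idea made explicit). For uniqueness, however, the paper argues element-wise: given two minimal nearest separators $\cS,\cS'$ and $W\in\cS$, minimality of $\cS$ produces a path $\Pi$ in $\cG$ blocked by $\cS$ but not by $\cS\setminus\{W\}$; Lemma~3.17 of \citet{richardson2002ancestral} turns its non-colliders into a path in $(\cG_{\mathrm{pan}_{\cG}(X\cup Y)})^m$ whose terminal subpath $\pi'$ satisfies $\mathrm{nodes}(\pi')\cap\cS=\{W\}$, and two applications of condition (ii) --- first to $\cS$, forcing $\mathrm{nodes}(\pi')\cap\cS'\neq\emptyset$, then to $\cS'$ on a further subpath, deriving a contradiction unless $W\in\cS'$ --- give $\cS\subseteq\cS'$, with no global structural claim about nearest separators ever needed. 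You instead prove a structural characterization, and it checks out: the quantifier-unwinding of (ii) into component-monotonicity is valid (the contrapositive of (ii) says exactly that a path missed by some separator is missed by $\cS$, yielding $C_Y(\cS')\subseteq C_Y(\cS)$; conversely, a path hit by $\cS$ but missed by a separator $\cS'$ would place an $\cS$-node inside $C_Y(\cS)$, which is impossible), the existence of the maximal component $C_Y^{\ast}$ is supplied by the algorithm's nearest separator from your existence step, and the boundary argument pinning $\cS=N(C_Y^{\ast})$ is sound. Your route buys more than the paper's: an explicit closed form $\mathrm{ZL}_{\cG}(X,Y)=N(C_Y^{\ast})$ and a transparent invariance proof, at the cost of needing the full two-way moralization correspondence on $\mathrm{pan}_{\cG}(X\cup Y)$ --- in particular the direction that a vertex cut in the moral graph is an $m$-separator in $\cG$, which certifies that $N(C_Y^{\ast})$ separates, together with the small endpoint check that $X$ has no moral-graph neighbor in $C_Y^{\ast}$ (otherwise the separator realizing $C_Y^{\ast}$ would fail to cut $X$ from $Y$) --- whereas the paper invokes only the one-directional Lemma~3.17 and otherwise works directly with the path-based definitions. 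You flagged precisely these two points as the remaining obstacles, and both go through, so filling them in yields a complete and arguably more informative proof.
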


\begin{proof}
We first show that $\mathrm{ZL}_{\cG}(X,Y)$ is both minimal and nearest for $(X,Y)$. Minimality was already established in \citep{van_der_zander2020finding}, and by construction of the algorithm that outputs $\mathrm{ZL}_{\cG}(X,Y)$ in \citep{van_der_zander2020finding}, $\mathrm{ZL}_{\cG}(X,Y)$ is a subset of a nearest separator. By Lemma \ref{lem.subset_nearestv2}, $\mathrm{ZL}_{\cG}(X,Y)$ is also nearest.
To prove uniqueness, let $\cS,\cS'$ be two minimal nearest separators for $(X,Y)$. By symmetry we need only show $\cS \subset \cS'$. Let $W \in \cS$. We want to show that $W \in \cS'$. Since $\cS$ is a minimal separator, there must exist a path $\Pi$ from $X$ to $Y$ on $\cG$ that is not closed by $\cS\backslash \{W \}$ but closed by $\cS$. In particular, $W$ is the unique element of $\cS$ that is a non-collider on $\Pi$. By \citep[Lemma 3.17]{richardson2002ancestral} the sequence of non-colliders on $\Pi$ forms a undirected path $\pi$ on $(\cG_{\mathrm{pan}_{\cG}(X \cup Y)})^m$ from $X$ to $Y$. The subpath $\pi'$ starting from $W$ and ending at $Y$ has the property that $\mathrm{nodes}(\pi') \cap \cS = \{ W\}$.
Now, we use the assumption that $\cS$ is a nearest separator for $(X,Y)$. Condition (ii) of the definition of nearest separators namely implies that $\mathrm{nodes}(\pi') \cap \cS' \neq \emptyset $. If $W \in \mathrm{nodes}(\pi') \cap \cS'$, then in particular $W \in \cS'$ and we are done. To finish the proof, we show that $W \notin \mathrm{nodes}(\pi') \cap \cS'$ leads to a contradiction. If $W \notin \mathrm{nodes}(\pi') \cap \cS'$, we choose $W' \in \mathrm{nodes}(\pi') \cap \cS'$ and consider the subpath $\pi''$ starting from $W'$. Since $\cS'$ was also assumed to be nearest, $\mathrm{nodes}(\pi'') \cap \cS' \neq \emptyset$ implies that also $\mathrm{nodes}(\pi'') \cap \cS \neq \emptyset$. But $\pi''$ was a subpath of $\pi' \backslash \{W \}$ so it follows that  $\mathrm{nodes}(\pi') \backslash\{ W\} \cap \cS \neq \emptyset$ which contradicts $\mathrm{nodes}(\pi') \cap \cS = \{ W\}$. This concludes the proof.

Finally, we note that minimality and being nearest are invariant under Markov equivalent which proves that $\mathrm{ZL}_{\cG}(X,Y) = \mathrm{ZL}_{\cG'}(X,Y)$ for Markov equivalent MAGs $\cG \sim \cG'$.  
\end{proof}    

\begin{theorem} \label{thm.sep_distances_identification_v2}
Let $\mathfrak{S}$ be a universal sep-strategy for MAGs. If $d^{\mathfrak{S}}(\cG,\cH) = 0$, then 
\begin{enumerate}
    \item $\mathrm{sk}(\cG) \subset \mathrm{sk}(\cH)$;
    \item If $(X,Y,Z)$ is an adjacent triple on both graphs, and an unshielded collider on $\cH$, then it is an unshielded collider on $\cG$.
    \item if $\pi$ is a discriminating path for node $V$ in both graphs, and $V$ is a collider on $\pi$ in $\cH$, then it is a collider on $\pi$ in $\cG$.
\end{enumerate}
In particular, $d_{\text{sym}}^{\mathfrak{S}}(\cG,\cH) =0$ if and only if $\cG$ and $\cH$ are Markov equivalent.
\end{theorem}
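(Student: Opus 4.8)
The plan is to unwind the hypothesis and then deduce the three structural conclusions, reserving the symmetric corollary for a combination of them with the Markov-equivalence characterization quoted above. First, observe that since every summand $\iota^{\mathfrak{S}}_{\cG,\cH}(X,Y)$ is nonnegative, $d^{\mathfrak{S}}(\cG,\cH)=0$ means exactly that for every non-adjacent pair $(X,Y)$ of $\cH$ the separator $\mathfrak{S}_{\cH}(X,Y)$ — which separates $(X,Y)$ in $\cH$ by definition of a sep-strategy — continues to separate them in $\cG$, i.e.\ $X \bowtie_{\cG} Y \mid \mathfrak{S}_{\cH}(X,Y)$. Item (i) is then immediate: non-adjacent $X,Y$ in $\cH$ are $m$-separated by some set in $\cG$, and since adjacent nodes admit no separator, $X,Y$ must be non-adjacent in $\cG$, which is precisely $\mathrm{sk}(\cG)\subset\mathrm{sk}(\cH)$.

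For (ii) I would test the separator $\cS:=\mathfrak{S}_{\cH}(X,Z)$ of the non-adjacent pair $(X,Z)$. Because $(X,Y,Z)$ is an unshielded collider in $\cH$, the length-two path through $Y$ has $Y$ as a collider, so blocking it in $\cH$ forces $Y$ and all of its $\cH$-descendants out of $\cS$; in particular $Y\notin\cS$. By the unwound hypothesis $\cS$ also separates $(X,Z)$ in $\cG$. If $Y$ were a non-collider on the triple in $\cG$, the length-two path $X\!-\!Y\!-\!Z$ — whose edges exist in $\cG$ since $Y$ is adjacent to both there — would be open given $\cS$, as its only interior node $Y$ is a non-collider not contained in $\cS$, contradicting separation. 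Hence $Y$ is a collider in $\cG$, and the non-adjacency of $X,Z$ in $\cG$ (from (i)) makes the triple an unshielded collider.

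The heart of (iii), which I expect to be the main obstacle, is the following claim proved by induction along the path: if $\pi=(X,Q_1,\dots,Q_k,V,Y)$ is a discriminating path for $V$ in a MAG and $\cS$ is any set $m$-separating $X$ and $Y$, then every $Q_i\in\cS$, and consequently $V$ is a collider on $\pi$ if and only if $V\notin\cS$. The induction exploits that each $Q_i$ is a parent of $Y$: leaving $\pi$ through the edge $Q_i\to Y$ yields the path $(X,Q_1,\dots,Q_i,Y)$ on which $Q_i$ becomes a non-collider while $Q_1,\dots,Q_{i-1}$ remain colliders; as the inductive hypothesis places $Q_1,\dots,Q_{i-1}$ into $\cS$, these colliders are active (each contains itself as a descendant in $\cS$), so the only way $\cS$ can block this path is $Q_i\in\cS$. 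With all $Q_i\in\cS$, the colliders of $\pi$ are active up to $V$, so $\pi$ is blocked precisely when the $V$-part blocks it — $V\in\cS$ if $V$ is a non-collider, and $V$ without descendants in $\cS$ if $V$ is a collider — giving the stated equivalence. Applying the claim in $\cH$, where $V$ is a collider and hence $V\notin\mathfrak{S}_{\cH}(X,Y)$, and then in $\cG$ with the same set $\mathfrak{S}_{\cH}(X,Y)$ (a separator of $(X,Y)$ in $\cG$ by the unwound hypothesis), forces $V$ to be a collider on $\pi$ in $\cG$ as well.

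Finally, for the equivalence I would note that $d^{\mathfrak{S}}_{\text{sym}}(\cG,\cH)=0$ holds if and only if both $d^{\mathfrak{S}}(\cG,\cH)=0$ and $d^{\mathfrak{S}}(\cH,\cG)=0$. If $\cG$ and $\cH$ are Markov equivalent, both vanish because a separator in one graph is a separator in the other. Conversely, applying (i)--(iii) in both directions gives equal skeletons, the same unshielded colliders, and agreeing collider status on every path that is discriminating in both graphs; by the characterization of \citet{richardson2002ancestral} quoted above, these three conditions are exactly Markov equivalence of $\cG$ and $\cH$.
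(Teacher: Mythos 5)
Your proof is correct and takes essentially the same route as the paper's: you unwind $d^{\mathfrak{S}}(\cG,\cH)=0$ into the transfer of each separator $\mathfrak{S}_{\cH}(X,Y)$ to $\cG$, deduce the skeleton inclusion, use $Y\notin\mathfrak{S}_{\cH}(X,Z)$ to force the collider in (ii), run the same induction placing the discriminating-path nodes into the separator for (iii), and close with the Richardson--Spirtes characterization exactly as the paper does. The only difference is presentational: your induction in (iii) spells out what the paper compresses into a one-line ``by induction over $i$''.
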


\begin{proof}
    If $X$ and $Y$ are non-adjacent in $\cH$, they are separated by $\mathfrak{S}_{\cH}(X,Y)$, and since $d^{\mathfrak{S}}(\cG,\cH) = 0$, $\mathfrak{S}_{\cH}(X,Y)$ also separates them in $\cG$. Hence $X$ and $Y$ are also non-adjacent in $\cG$, proving the first claim. To establish the second claim, consider an unshielded collider $(X,Y,Z) \in \mathcal{U}_c(\cH)$ that is also an adjacent triple on $\cG$. By the first claim  $(X,Y,Z)$ is also unshielded in $\cG$. $(X,Y,Z) \in \mathcal{U}_c(\cH)$ implies that $Y \notin \mathfrak{S}_{\cH}(X,Z)$. Since $d^{\mathfrak{S}}(\cG,\cH) = 0$, $\mathfrak{S}_{\cH}(X,Z)$ is therefore a separating set not containing $Y$ on $\cG$ as well, so that $(X,Y,Z)$ must be a collider on $\cG$. Now, let $\pi = (X,W_1,\dots,W_s,V,Y)$ be a discriminating path for node $V$ between nodes $X$ and $Y$ on both graphs such that $V$ is a collider on $\pi$ in $\cH$. By induction over $i$, $W_i\in \mathfrak{S}_{\cH}(X,Y)$ for $i=1,\dots,s$ and moreover because of the collider property $V \notin \mathfrak{S}_{\cH}(X,Y)$. If $V$ was not a collider on $\pi$ in $\cG$, then $\mathfrak{S}_{\cH}(X,Y)$ would open the path $\pi$ in $\cG$ which would contradict $d^{\mathfrak{S}}(\cG,\cH) = 0$. Hence $V$ must be a collider on $\pi$ in $\cG$.

    Finally, let us consider the symmetrized distance $d_{\text{sym}}^{\mathfrak{S}}(\cG,\cH)$ which is zero if and only if $d^{\mathfrak{S}}(\cG,\cH) = 0$ and $d^{\mathfrak{S}}(\cH,\cG) =0$. So, by the first part of the theorem, both graphs have the same skeleton, the same unshielded triples and the same colliders on shared discriminating paths. Hence they are Markov equivalent. Conversely, if $\cG$ and $\cH$ are Markov equivalent, then any separating strategy for one graph is a separating strategy for the other, so $d_{\text{sym}}^{\mathfrak{S}}(\cG,\cH)=0$.
\end{proof}

The following figure proves that neither parent nor ancestor separation are valid sep-strategies for MAGs.

\begin{table}[h]
    \centering
     \tikz{ %
        \node[latent] (V) {$V$} ; %
        \node[latent,left=of V] (X) {$X$} ; %
        \node[latent,right=of V] (Y) {$Y$} ;
        \node[latent,yshift = 2cm] (Z) {$Z$} ; %
        \node[latent,yshift = -2cm] (W) {$W$} ;%
        \edge {X,V} {Z} ; %
        \edge {Z} {X,Y,V} ; %
        \edge {V,Y} {W} ;
        \edge {W} {Y,X} ;
      }
    %\caption{A directed acyclic graph (DAG).}

    \captionof{figure}{In this MAG $\cG$, $\mathrm{pa}_{\cG}(X \cup Y) = \{ W,Z \}$ does not $m$-separate $X$ and $Y$ as it unblocks the path $X \leftrightarrow Z \leftrightarrow V \to W \leftrightarrow Y$. The same is true for the ancestral set $\mathrm{anc}_{\cG}(X \cup Y)\backslash\{X,Y\}$ as well as for the set of potential parents $\mathrm{ppa}_{\cG}(X \cup Y)$  as both coincide with $\mathrm{pa}_{\cG}(X \cup Y)$ in this example.} \label{fig.counterexample_parent_separation}
\end{table}

\section{MB-SEPARATION IN MAGS} \label{app.MB_enhanced}

In this section, we extend the Markov blanket separation distance to MAGs. Recall that the Markov blanket of $X$ is the smallest set of nodes of $\cG$ with the property that $X \bowtie_{\cG} Y | \mathrm{MB}_{\cG}(X)$ for all $Y \notin \mathrm{MB}_{\cG}(X)$. The following characterization of the Markov blanket for MAGs has been considered in various places, see for instance \citep{pellet2008finding}. It is based on the notion of collider paths: A path $\pi = (\pi(1),\pi(2),\dots,\pi(n))$ from $X = \pi(1)$ to $Y = \pi(n)$ is a \emph{collider path} if $n > 2$ and all middle nodes $\pi(2),\dots, \pi(n-1)$ are colliders on $\pi$. The Markov blanket $\mathrm{MB}_{\cG}(X)$ of a node $X$ on a MAG $\cG$ then consists of all nodes $Y$ such that
\begin{itemize}
    \item[(i)] $Y$ is adjacent to $X$ or,
    \item [(ii)] there exists a collider path from $Y$ to $X$.
\end{itemize}

We record that the Markov blanket is invariant under Markov equivalence.

\begin{lemma} \label{lem.MB_invariance}
If two MAGs $\cG \sim \cG'$ are Markov equivalent, then for any node $X$, we have $\mathrm{MB}_{\cG}(X) = \mathrm{MB}_{\cG'}(X)$.
\end{lemma}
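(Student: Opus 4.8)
The plan is to exploit the fact that the Markov blanket was \emph{defined} through a separation property rather than a graphical one, so that its invariance under Markov equivalence becomes almost immediate once the defining property is recognized as shared. Concretely, for a fixed node $X$ consider the family
\[
\mathcal{A}_{\cG} := \{ \cS \subseteq \cV \setminus \{X\} : X \bowtie_{\cG} Y \mid \cS \text{ for all } Y \in \cV \setminus (\cS \cup \{X\}) \},
\]
so that $\mathrm{MB}_{\cG}(X)$ is by definition the smallest element of $\mathcal{A}_{\cG}$. The membership condition ``$\cS \in \mathcal{A}_{\cG}$'' only refers to separation statements of the form $X \bowtie_{\cG} Y \mid \cS$, i.e. to elements of $\cC_{sep}(\cG)$. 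Since $\cG \sim \cG'$ means precisely $\cC_{sep}(\cG) = \cC_{sep}(\cG')$, the two families coincide, $\mathcal{A}_{\cG} = \mathcal{A}_{\cG'}$, and hence so do their smallest elements, which is exactly the claim $\mathrm{MB}_{\cG}(X) = \mathrm{MB}_{\cG'}(X)$.

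The only genuine gap in this argument, and the step I expect to require the most care, is the well-definedness of \emph{the} smallest element of $\mathcal{A}_{\cG}$: a priori $\mathcal{A}_{\cG}$ might contain several inclusion-incomparable minimal sets, in which case ``the smallest set'' would be ambiguous and the one-line argument would not suffice. I would close this gap by showing that $\mathcal{A}_{\cG}$ is closed under intersection, so that it possesses a unique $\subseteq$-minimal element, namely $\bigcap_{\cS \in \mathcal{A}_{\cG}} \cS$. This uses only that $m$-separation satisfies the graphoid axioms (in particular weak union and the intersection property). Writing $A = \cS \setminus \cT$, $B = \cT \setminus \cS$, $C = \cS \cap \cT$ and $D$ for the remaining nodes (excluding $X$), membership $\cS, \cT \in \mathcal{A}_{\cG}$ reads $X \bowtie_{\cG} B \cup D \mid A \cup C$ and $X \bowtie_{\cG} A \cup D \mid B \cup C$. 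Applying weak union to the second statement gives $X \bowtie_{\cG} A \mid B \cup C \cup D$, and combining this with the first via the intersection property yields $X \bowtie_{\cG} A \cup B \cup D \mid C$; by decomposition this says $X \bowtie_{\cG} Y \mid C$ for every $Y \in A \cup B \cup D = \cV \setminus (C \cup \{X\})$, i.e. $\cS \cap \cT = C \in \mathcal{A}_{\cG}$.

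Alternatively, and perhaps more in keeping with the surrounding exposition, one may simply invoke the graphical characterization of $\mathrm{MB}_{\cG}(X)$ recalled just above the lemma: it exhibits $\mathrm{MB}_{\cG}(X)$ as one explicit set, hence as the unique smallest element of $\mathcal{A}_{\cG}$, and the first paragraph then finishes the proof. I would nonetheless prefer the separation-theoretic route, because it keeps the argument internal to the notion of separation and does not force one to preserve collider paths across Markov equivalent graphs directly. The latter is awkward to track: a \emph{shielded} collider on an arbitrary collider path is not controlled by the Richardson--Spirtes invariants (shared skeleton, shared unshielded colliders, shared colliders on discriminating paths), so a purely graphical verification of condition (ii) of the Markov-blanket characterization would be substantially more delicate than the separation-based argument sketched here.
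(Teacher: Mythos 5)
Your first paragraph is essentially the paper's own proof in different clothing: the paper observes that, since $m$-separation is invariant under Markov equivalence, $\mathrm{MB}_{\cG}(X)$ satisfies the defining separation property in $\cG'$, so minimality gives $\mathrm{MB}_{\cG'}(X) \subset \mathrm{MB}_{\cG}(X)$, and swapping the roles of $\cG$ and $\cG'$ yields equality --- i.e., the paper runs your ``$\mathcal{A}_{\cG} = \mathcal{A}_{\cG'}$'' argument as a double inclusion and stops there. What you add beyond the paper is the verification that ``the smallest element'' is well defined: the paper's definition of the Markov blanket simply presupposes a unique minimum, and its two-line proof tacitly relies on that. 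Your closure-under-intersection argument via the graphoid axioms correctly supplies the missing uniqueness: $m$-separation does satisfy the intersection property unconditionally, $\mathcal{A}_{\cG}$ is nonempty (it contains $\cV \setminus \{X\}$ vacuously), and so $\bigcap_{\cS \in \mathcal{A}_{\cG}} \cS$ is the unique $\subseteq$-minimum. One small point worth making explicit in your derivation: reading membership $\cS \in \mathcal{A}_{\cG}$ as the joint statement $X \bowtie_{\cG} B \cup D \mid A \cup C$, rather than as the pairwise separations that actually define $\mathcal{A}_{\cG}$, uses the composition property; $m$-separation satisfies composition as well, so nothing breaks, but it should be cited alongside weak union and intersection. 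Your closing remark is also apt: a purely graphical proof via the collider-path characterization would have to control shielded colliders on arbitrary collider paths, which the Richardson--Spirtes invariants do not track directly, and the paper sidesteps this exactly as you do --- by arguing through separations alone.
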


\begin{proof}
Since $m$-separation is invariant under Markov equivalence $\mathrm{MB}_{\cG}(X)$ has the property that $X \bowtie_{\cG'} Y | \mathrm{MB}_{\cG}(X)$ for all $Y \notin \mathrm{MB}_{\cG}(X)$. Since $\mathrm{MB}_{\cG'}(X)$ is the smallest subset with this property, we must have $\mathrm{MB}_{\cG'}(X) \subset \mathrm{MB}_{\cG}(X)$. Repeating the argument with reversed roles of $\cG$ and $\cG'$ finishes the proof.
\end{proof}

Therefore if $\mathfrak{S}$ defines a sep-strategy  on Markov equivalence classes, then so does its MB-enhanced strategy. For PAGs (and consequently also for MAGs), we can therefore define the MB-enhancement of ZL-separation
\begin{align*}
    \mathfrak{S}_{\cG}(X,Y) =
    \begin{cases}
        \mathrm{MB}_{\cG}(X) &\text{if } Y \notin \mathrm{MB}_{\cG}(X) \\
         \mathrm{ZL}_{\cG}(X,Y) &\text{else. } 
    \end{cases}
\end{align*}
for all non-adjacent pairs $(X,Y)$. In DAGs, the computational advantage of Markov blanket separation over other sep-strategies is that its implementation can avoid a double loop over nodes $X$ and $Y$ for all but the exceptional cases, leading to lower computional complexity if the number of exceptions is small, see Appendix \ref{app.algorithms} below. This is the case if the maximal node degree $d$, i.e. the maximal number of neighbors per node is small compared to the total number of nodes: any $Y \in \mathrm{MB}_{\cG}(X)$ that is non-adjacent to $X$ can be reached in exactly two steps, so the number of exceptional cases is $d^2$ at most. In MAGs, this guarantee can no longer be upheld, since collider paths can be of arbitrary lengths and thus a small node degree is no longer sufficient to bound the size of the set of exceptions. Such bounds can only be retained if in addition to the node degree, the maximal lengths of collider paths is assumed to be small compared to the number of nodes.

\section{FURTHER RESULTS ON S/C-METRICS} \label{app.s/c_metrics}

Two causal graphs $\cG,\cH$ of the same type are called $K$\emph{-th order Markov equivalent} if their separations/connections coincide up to order $K$ \citep{kocaoglu2023characterization}, that is if $d_{s/c}^{\leq K}(\cG,\cH) =0$. 

\begin{lemma} \label{lem.K_classes_metric}
Let $\mathbb G$ be a class of graphs with an appropriate notion of separation, e.g. $\mathbb G = \{ \text{MAGs}\}$ and $m$-separation. $d_{s/c}^{\leq K}$ is a metric on the set $\faktor{\mathbb G}{\sim}$ of $K$-th order Markov equivalence classes of $\mathbb G$.   
\end{lemma}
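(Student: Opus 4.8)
The plan is to recognize $d_{s/c}^{\leq K}$ as a weighted $\ell^1$-type distance between the truncated separation/connection profiles of the two graphs, and then to invoke the standard passage from a pseudometric to a genuine metric on its quotient by the null relation.

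First I would establish that $d_{s/c}^{\leq K}$ is a pseudometric on $\mathbb G$ itself. Non-negativity and symmetry are immediate from the definition: each summand $|\iota_\cG(X,Y,\cS) - \iota_\cH(X,Y,\cS)|$ is non-negative and unchanged under swapping $\cG$ and $\cH$, and every normalizing factor $\tfrac{1}{(K+1)|\cC^k|}$ is strictly positive. For the triangle inequality I would use the pointwise triangle inequality for the absolute value: for any third graph $\mathcal{K} \in \mathbb G$ and any triple $(X,Y,\cS)$,
\[|\iota_\cG - \iota_\cH| \leq |\iota_\cG - \iota_{\mathcal K}| + |\iota_{\mathcal K} - \iota_\cH|,\]
all evaluated at $(X,Y,\cS)$. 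Summing over $(X,Y,\cS)\in\cC^k$, dividing by $|\cC^k|$, then summing over $k=0,\dots,K$ and dividing by $K+1$, preserves the inequality by linearity and yields $d_{s/c}^{\leq K}(\cG,\cH) \leq d_{s/c}^{\leq K}(\cG,\mathcal K) + d_{s/c}^{\leq K}(\mathcal K,\cH)$. Finally $d_{s/c}^{\leq K}(\cG,\cG)=0$ is clear.

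Next I would invoke the standard fact that a pseudometric $d$ descends to a genuine metric on the quotient by its null relation, declaring $\cG \sim \cH$ whenever $d_{s/c}^{\leq K}(\cG,\cH)=0$. Well-definedness of the descended map follows from the triangle inequality already established: if $d_{s/c}^{\leq K}(\cG,\cG')=0$ and $d_{s/c}^{\leq K}(\cH,\cH')=0$, then two applications of the triangle inequality give $|d_{s/c}^{\leq K}(\cG,\cH) - d_{s/c}^{\leq K}(\cG',\cH')| \leq d_{s/c}^{\leq K}(\cG,\cG') + d_{s/c}^{\leq K}(\cH,\cH') = 0$. On the quotient, the identity of indiscernibles then holds by construction, while non-negativity, symmetry and the triangle inequality are inherited verbatim. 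It remains only to identify this null relation with $K$-th order Markov equivalence, which is exactly the step where strict positivity of the weights is essential: since every term of the double sum is non-negative and every weight $\tfrac{1}{(K+1)|\cC^k|}$ is strictly positive, $d_{s/c}^{\leq K}(\cG,\cH)=0$ holds if and only if $\iota_\cG(X,Y,\cS)=\iota_\cH(X,Y,\cS)$ for every $k\leq K$ and every $(X,Y,\cS)\in\cC^k$, i.e. exactly when $\cG$ and $\cH$ agree on all separation/connection statements of order at most $K$. By the definition of $K$-th order Markov equivalence recalled above, this is precisely $\cG \sim \cH$, so $\faktor{\mathbb G}{\sim}$ coincides with the metric quotient.

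The argument is essentially routine and I would not expect a deep obstacle. The one point that deserves care is the interplay between well-definedness on the quotient and positive-definiteness: one must ensure that the equivalence relation used to form the quotient is exactly the relation on which $d_{s/c}^{\leq K}$ vanishes, and it is the strict positivity of the normalizing constants (rather than any structural property of separation) that upgrades ``the weighted sum vanishes'' to ``the indicators agree pointwise.''
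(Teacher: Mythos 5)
Your proof is correct and is in substance the same as the paper's: the paper likewise treats $d_{s/c}^{\leq K}$ as an $\ell^1$-type distance between 0/1 separation profiles, realizing it as $\| g_K(\cG) - g_K(\cH) \|$ for an embedding $g_K$ of the quotient $\faktor{\mathbb G}{\sim}$ into a graded normed space, which packages your hand-verified pseudometric axioms and null-relation quotient into the norm axioms plus injectivity of $g_K$. The only differences are bookkeeping --- direct axiom verification and the standard pseudometric quotient versus an isometric embedding --- and your explicit appeal to strict positivity of the weights $\tfrac{1}{(K+1)|\cC^k|}$ spells out a step the paper absorbs into its definition of $K$-th order Markov equivalence (which is stated there directly as $d_{s/c}^{\leq K}(\cG,\cH)=0$).
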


\begin{corollary} \label{lem.metric}
Consider a class of graphs $\mathbb{G}$ with an appropriate notion of separation. 

Then, $d_{s/c}(\cG,\cH) = 0$ if and only if $\cG$ and $\cH$ are Markov equivalent. Moreover, $d_{s/c}$ defines a metric (in the mathematical sense of the word) on the set of Markov equivalence classes over $\mathbb G$. 
\end{corollary}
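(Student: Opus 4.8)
The plan is to obtain the corollary as the $K=N-2$ instance of Lemma \ref{lem.K_classes_metric}, so the only thing to check is that $(N-2)$-th order Markov equivalence is nothing but ordinary Markov equivalence. First I would recall from the definitions that $d_{s/c}=d_{s/c}^{\leq N-2}$ and that every admissible triple $(X,Y,\cS)\in\cC$ has separating set $\cS\subset\cV\setminus\{X,Y\}$, hence $|\cS|\leq N-2$. Consequently the decomposition $\cC=\bigsqcup_{k=0}^{N-2}\cC^k$ recorded in Section \ref{sec.notations} shows that every separation/connection statement of a graph sits in some order $k\leq N-2$. Therefore $\cC_{sep}(\cG)=\cC_{sep}(\cH)$ holds if and only if $\cG$ and $\cH$ agree on separations of every order $k=0,\dots,N-2$, which is precisely the vanishing of $d_{s/c}^{\leq N-2}(\cG,\cH)$. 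In other words, the $(N-2)$-th order Markov equivalence relation coincides with full Markov equivalence, so $\faktor{\mathbb G}{\sim}$ at level $N-2$ is exactly the set of Markov equivalence classes. Granting this identification, both assertions follow at once: the identity-of-indiscernibles axiom supplied by Lemma \ref{lem.K_classes_metric} gives $d_{s/c}(\cG,\cH)=0\iff \cG\sim\cH$, and its conclusion is that $d_{s/c}$ is a metric on the quotient.

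If instead one wants a self-contained argument (equivalently, to supply the proof of Lemma \ref{lem.K_classes_metric}), I would view $d_{s/c}$ as a normalized, positively weighted Hamming distance between the indicator functions $\iota_\cG,\iota_\cH\in\{0,1\}^{\cC}$, the weight of a triple $(X,Y,\cS)\in\cC^k$ being $\tfrac{1}{(N-1)|\cC^k|}>0$. Symmetry is immediate from $|\iota_\cG-\iota_\cH|=|\iota_\cH-\iota_\cG|$, and the triangle inequality follows by applying the scalar inequality $|\iota_\cG-\iota_\cH|\leq|\iota_\cG-\iota_{\mathcal{K}}|+|\iota_{\mathcal{K}}-\iota_\cH|$ term by term, for any third graph $\mathcal{K}$, and summing against the nonnegative weights. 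Well-definedness on MECs holds because $\iota_\cG$ is completely determined by $\cC_{sep}(\cG)$, which is constant on each Markov equivalence class, so none of the quantities above depend on the chosen representatives.

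The only genuinely substantive point — which I would flag as the crux rather than an obstacle, since the metric axioms are routine — is positive-definiteness on the quotient: $d_{s/c}(\cG,\cH)=0$ forces every weighted summand to vanish because all weights are strictly positive, hence $\iota_\cG=\iota_\cH$ on all of $\cC$, i.e. $\cC_{sep}(\cG)=\cC_{sep}(\cH)$, which by definition means $\cG\sim\cH$. This is exactly where the completeness observation $|\cS|\leq N-2$ enters: it guarantees that matching the finitely many orders $0,\dots,N-2$ already pins down the full collection of separation statements, so that the induced pseudometric descends to a genuine metric precisely on the set of MECs.
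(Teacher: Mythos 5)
Your proposal is correct and takes essentially the same route as the paper: the corollary is obtained as the $K=N-2$ case of Lemma \ref{lem.K_classes_metric}, with the key observation that $|\cS|\leq N-2$ means the decomposition $\cC=\bigsqcup_{k=0}^{N-2}\cC^k$ exhausts all separation statements, so $(N-2)$-th order Markov equivalence coincides with full Markov equivalence --- exactly the closing step of the paper's proof. Your self-contained verification is also the same argument in substance: the paper packages the weighted Hamming distance on indicator vectors as the pullback of a weighted $\ell^1$-norm along the embedding $g_K:\cG\mapsto(1-\iota_{\cG}|_{\cC_0},\dots,1-\iota_{\cG}|_{\cC_K})$ into a graded normed space, whereas you check symmetry, the triangle inequality, MEC-invariance, and positive-definiteness directly, which is a purely presentational difference.
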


\begin{proof}[Proof of Lemma \ref{lem.K_classes_metric}]
 Consider the real vector space $V_k$ of maps $\cC_k \to \mathbb R$ on which $\| f \|_k = \frac{1}{\cC_k}\sum_{(X,Y,\cS) \in \cC_k} |f_k(X,Y,\cS)| $ defines a norm. Then $\| (f_0,\dots,f_K)\| := \sum_{k=0}^K \| f_k \|_k$ defines a norm on the graded vector space $V_K= \bigoplus_{k=0}^{K} V_k$.
 We now define the mapping
 \begin{align*}
     g_K: \mathbb G \to V_K, \qquad \cG \mapsto (1-i_{\cG}|_{\cC_0},\dots, 1-i_{\cG}|_{\cC_K}),
 \end{align*}
 where $i_{\cG}|_{\cC_k}$ is the separation indication function restricted to the set of $k$-th order separation/connection statements. We observe that $g_K(\cG_0) = g_K(\cG)$ if and only if $\cG$ and $\cH$ have the same separations up to order $K$. In other words, the mapping $g_K$ is well-defined on the quotient $\faktor{\mathbb G}{\sim}$ of $K$-th order Markov equivalence classes and becomes an embedding.
 Since $d_{s/c}^{\leq K}(\cG,\cH)$ is nothing but 
\begin{align*}
 d_{s/c}^{\leq K}(\cG,\cH) = \| g_K(\cG)- g_K(\cH)\|,
\end{align*}
it follows directly that $d_{s/c}^{\leq K}$ is a metric on $\faktor{\mathbb G}{\sim}$. Finally, since the usual notion of Markov equivalence means that two graphs share exactly the same separations, $\cG$ and $\cH$ are Markov equivalent if and only if $d_{s/c}(\cG,\cH) = d_{s/c}^{\leq N-2}(\cG,\cH)  = 0$, so this is indeed the special case where $K = N-2$ by which point all separation statements have been exhausted.

\end{proof}

\begin{remark}

\begin{itemize}
  \item[(i)] Since $ d_{s/c}^{k}(\cG,\cH)$ is bounded by $1$, $ d_{s/c}^{\leq K}(\cG,\cH)$ is also bounded by $1$ due to the normalization constant $\tfrac{1}{K+1}$. This value is taken if $\cG$ is the fully disconnected and $\cH$ is a fully connected graph.
    \item[(ii)] If one prefers to assign more importance to differences in low order statements than to differences in higher order statements, this can be reflected by introducing a weight $w_k, \ k=0,\dots N-2$ and replace the s/c-metric with a weighted version
    \begin{align*}
    d_{s/c}^{w}(\cG,\cH) = \frac{1}{\sum_k w_k} \sum_{k=0}^K w_k\cdot d_{s/c}^{k}(\cG,\cH).
\end{align*}
\end{itemize}
\end{remark}

\subsection{Markov and Faithfulness metric} \label{subsec.Markov_Faithfulness_distance}
The s/c-metric introduced in the previous subsection is a symmetric notion of distance for two causal graphs to which differences in connections and differences in separations between the two graphs contribute equally. We can also consider separations and connections separately at the price of losing symmetry: we have to specify one graph as a reference point for the separations or connections that we would like to compare. A version of the distance measures that we are about to introduced (without the grading by order) has been used implicitly in \citep{hyttinen2014constraint} to evaluate their causal discovery method.

\begin{definition} \label{def.c-dis}
Consider two causal graphs $\cG,\cH$ over the same set of nodes, equipped with an appropriate notion of separation. 
We first define
\begin{align*}
    d_{c}^{k}(\cG,\cH) := \frac{1}{|\cC^k_{con}(\cG)|} \sum_{(X,Y,\cS) \in \cC^k_{con}(\cG)} (1- \iota_{\cH}(X,Y,\cS)).
\end{align*}
for $k = 0,\dots, N-2$. Then we call 
\begin{align*}
d_c^{\leq K}  = \frac{1}{K+1} \sum_{k=0}^K d_{c}^{k}(\cG,\cH).    
\end{align*}
 the \textbf{c-metric} or \textbf{Markov metric} of $\cH$ to $\cG$ \textbf{of order} $\mathbf{K}$. If $K = N-2$, we just speak of the \textbf{c-metric} or \textbf{Markov metric} and write $d_{c}$ instead of $d_{c}^{N-2}$.
\end{definition}

We recall that a distribution $P_{\cV}$ over the node variables $\cV$ is called \emph{Markovian} on a causal graph $\cG$, if the graphical separation $X \bowtie_{\cG} Y | \cS$ implies the conditional independence $X \ind_{P_{\cV}} Y | \cS$, and is called \emph{faithful} on $\cG$ if the graphical connection $X \centernot{\bowtie_{\cG}} Y | \cS$ implies the conditional dependence $X \centernot{\ind}_{P_{\cV}} Y | \cS$ 

The name 'Markov metric' is justified by the following result. 

\begin{lemma} \label{lem.Markov}
Consider two causal graphs $\cG,\cH$ over the same set of nodes $\X$, equipped with an appropriate notion of separation. Suppose that $P_{\cV}$ is a distribution on $\X$ that is Markovian and faithful on the graph $\cG$.
Then $d_{c}(\cG,\cH) \neq 0$, if and only if $P_{\cV}$ is not Markovian on $\cH$.
\end{lemma}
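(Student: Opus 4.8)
The plan is to reduce both sides of the claimed equivalence to one and the same graphical–probabilistic condition and then chain the reductions together; the argument is essentially a bookkeeping exercise in the definitions, with faithfulness supplying the crucial reverse implication. The starting observation is that $d_c(\cG,\cH)$ is a sum of nonnegative terms with positive normalization constants, so $d_c(\cG,\cH)=0$ forces every summand to vanish.

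First I would unwind the definition of the c-metric. For a triple $(X,Y,\cS)$ the quantity $1-\iota_\cH(X,Y,\cS)$ equals $1$ exactly when $X \bowtie_\cH Y \mid \cS$ and $0$ otherwise, and the summation ranges over $\cC_{con}(\cG)$, the triples connected in $\cG$. Hence $d_c(\cG,\cH)=0$ if and only if no triple connected in $\cG$ is separated in $\cH$. (When $\cC^k_{con}(\cG)=\emptyset$ the corresponding order-$k$ term is an empty sum and contributes nothing, so this reduction is valid at every order.) Taking the contrapositive gives
\begin{align*}
d_c(\cG,\cH)\neq 0 \iff \exists\,(X,Y,\cS):\ X \bowtie_\cH Y \mid \cS \ \text{and}\ X \centernot{\bowtie}_\cG Y \mid \cS,
\end{align*}
that is, $d_c(\cG,\cH)\neq 0$ precisely when $\cH$ carries a separation that fails in $\cG$.

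Next I would convert the condition on $\cG$ into a probabilistic one. Markovianity of $P_\cV$ on $\cG$ gives $X \bowtie_\cG Y \mid \cS \Rightarrow X \ind_{P_\cV} Y \mid \cS$, and faithfulness gives the converse as the contrapositive of its definition, $X \ind_{P_\cV} Y \mid \cS \Rightarrow X \bowtie_\cG Y \mid \cS$. Combining them yields the equivalence $X \bowtie_\cG Y \mid \cS \iff X \ind_{P_\cV} Y \mid \cS$, whence $X \centernot{\bowtie}_\cG Y \mid \cS \iff X \centernot{\ind}_{P_\cV} Y \mid \cS$. Substituting this into the displayed characterization, $d_c(\cG,\cH)\neq 0$ becomes the existence of a triple with $X \bowtie_\cH Y \mid \cS$ but $X \centernot{\ind}_{P_\cV} Y \mid \cS$, which is exactly the negation of the statement that $P_\cV$ is Markovian on $\cH$. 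This establishes both directions of the biconditional at once.

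I do not anticipate a genuine obstacle; the content is a short chain of equivalences rather than a construction. The one point to watch is the direction of the implications: Markovianity alone is insufficient, and it is the faithfulness assumption on $\cG$ that is doing the real work by furnishing the implication $X \ind_{P_\cV} Y \mid \cS \Rightarrow X \bowtie_\cG Y \mid \cS$. I would also flag the degenerate normalization $|\cC^k_{con}(\cG)|=0$ explicitly, as noted above, so that the first reduction step is unambiguous.
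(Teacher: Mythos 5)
Your proof is correct and follows essentially the same route as the paper's: both reduce $d_c(\cG,\cH)\neq 0$ to the existence of a triple in $\cC_{con}(\cG)\cap\cC_{sep}(\cH)$ and then invoke faithfulness of $P_{\cV}$ on $\cG$ for one direction and Markovianity on $\cG$ for the other, with your version merely packaging the two contradiction arguments as a single chain of equivalences. Your explicit remark on the degenerate case $|\cC^k_{con}(\cG)|=0$ is a small but sensible addition the paper leaves implicit.
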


\begin{proof}
If $d_{c}(\cG,\cH) \neq 0$, then there must be a triple $(X,Y,\cS) \in \cC_{con}(\cG)\cap \cC_{sep}(\cH)$. Since  $(X,Y,\cS) \in \cC_{con}(\cG)$ and $P_{\cV}$ is faithful on $\cG$, we must have that $X \centernot{\ind}_{P_{\cV}} Y | \cS$. If $P_{\cV}$ would be Markovian on $\cH$, $(X,Y,\cS) \in \cC_{sep}(\cH)$ would imply $X\ind_{P_{\cV}} Y | \cS$, which is a contradiction. Conversely, if $P_{\cV}$ is not Markovian on $\cH$, there must be a triple $(X,Y,\cS) \in \cC_{sep}(\cH)$ with $X \centernot{\ind}_{P_{\cV}} Y | \cS$. Since $P_{\cV}$ is Markovian on $\cG$, it follows that $(X,Y,\cS) \in \cC_{con}(\cG)$ and thus $(X,Y,\cS) \in \cC_{con}(\cG)\cap \cC_{sep}(\cH)$. Hence $d_{c}(\cG,\cH) \neq 0$. 
\end{proof}

Consider a simulated experiment to evaluate a causal discovery method $\cM$ which outputs the graph $\cH$. If the parameters of the data-generation process are chosen in such a way that the Markov property and causal Faithfulness w.r.t to the ground truth graph $\cG$ is guaranteed, according to the previous lemma, $d_{c}(\cG,\cH) \neq 0$ means that $P_{\cV}$ does not have the Markov property on the output graph. In other words, $d_{c}(\cG,\cH)$ measures how far the pair $(\cH,P_{\cV})$ is from being Markovian. As the Markov property is the most fundamental link between the distribution and the graph, a large Markov metric should be a clear warning sign that the algorithm is not performing well or that the data-generation process is far from being faithful on $\cG$. Markovianity on $\cG$ is usually a given in a data simulation.

We can define an analogous notion for separations instead of connections.

\begin{definition} \label{def.s-dis}
Consider two causal graphs $\cG,\cH$ over the same set of nodes $\X$, equipped with an appropriate notion of separation.
We first define
\begin{align*}
    d_{s}^{k}(\cG,\cH) := \frac{1}{|\cC^k_{sep}(\cG)|} \sum_{(X,Y,\cS) \in \cC^k_{sep}(\cG)} \iota_{\cH}(X,Y,\cS).
\end{align*}
for $k = 0,\dots, N-2$. Then we call 
\begin{align*}
d_s^{\leq K} = \frac{1}{K+1} \sum_{k=0}^K d_{s}^{k}(\cG,\cH).    
\end{align*}
 the \textbf{s-metric} or \textbf{Faithfulness metric} of $\cH$ to $\cG$ \textbf{of order} $\mathbf{K}$. If $K = N-2$, we just speak of the \textbf{c-} or \textbf{Faithfulness metric} and write $d_{c}$ instead of $d_{c}^{N-2}$.
\end{definition}

The following result is the analogue of Lemma \ref{lem.Markov} for the Faithfulness metric.

\begin{lemma} \label{lem.faithful}
Consider two causal graphs $\cG,\cH$ over the same set of nodes $\X$, equipped with an appropriate notion of separation. Suppose that $P_{\cV}$ is a distribution on $\X$ that is Markovian and faithful on the graph $\cG$.
Then $d_{s}(\cG,\cH) \neq 0$ if and only if $P_{\cV}$ is not faithful on $\cH$.
\end{lemma}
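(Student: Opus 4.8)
The plan is to mirror the proof of Lemma \ref{lem.Markov} verbatim, interchanging the roles of separation and connection and of the Markov and faithfulness properties. The first step is to translate the analytic condition $d_{s}(\cG,\cH) \neq 0$ into a purely combinatorial statement about triples. By Definition \ref{def.s-dis}, each $d_{s}^{k}(\cG,\cH)$ is a nonnegative average over triples $(X,Y,\cS) \in \cC_{sep}^{k}(\cG)$ of the indicator $\iota_{\cH}(X,Y,\cS)$, and this summand equals $1$ exactly when $X \centernot{\bowtie}_{\cH} Y | \cS$. Hence $d_{s}(\cG,\cH) \neq 0$ if and only if there exists a triple $(X,Y,\cS) \in \cC_{sep}(\cG) \cap \cC_{con}(\cH)$, i.e. a statement that separates in $\cG$ but connects in $\cH$. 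This reduction is the analogue of the opening line of the proof of Lemma \ref{lem.Markov}.

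For the forward direction I would assume $d_{s}(\cG,\cH) \neq 0$ and pick a witnessing triple $(X,Y,\cS) \in \cC_{sep}(\cG) \cap \cC_{con}(\cH)$. Because the triple separates in $\cG$ and $P_{\cV}$ is Markovian on $\cG$, we obtain $X \ind_{P_{\cV}} Y | \cS$. On the other hand the triple connects in $\cH$, so if $P_{\cV}$ were faithful on $\cH$, faithfulness would force $X \centernot{\ind}_{P_{\cV}} Y | \cS$, a contradiction. Therefore $P_{\cV}$ is not faithful on $\cH$. For the converse, assume $P_{\cV}$ is not faithful on $\cH$, so there is a triple with $X \centernot{\bowtie}_{\cH} Y | \cS$ yet $X \ind_{P_{\cV}} Y | \cS$. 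Applying the contrapositive of faithfulness on $\cG$ — namely that $X \ind_{P_{\cV}} Y | \cS$ forces $X \bowtie_{\cG} Y | \cS$ — shows the triple lies in $\cC_{sep}(\cG)$, hence in $\cC_{sep}(\cG) \cap \cC_{con}(\cH)$, and by the first step $d_{s}(\cG,\cH) \neq 0$.

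I do not expect a serious obstacle here: the result is a logical dual of Lemma \ref{lem.Markov}, and the entire content is bookkeeping. The only point requiring care is keeping the implication directions straight, since Markovianity converts \emph{separation} into \emph{independence} while faithfulness converts \emph{connection} into \emph{dependence}; swapping these is exactly what distinguishes this lemma from its predecessor. A minor technical wrinkle worth flagging is that the normalization $1/|\cC^{k}_{sep}(\cG)|$ in Definition \ref{def.s-dis} is only meaningful when $\cG$ admits separations of order $k$; for the argument this is harmless, since $d_{s}(\cG,\cH) \neq 0$ requires at least one order $k$ with $\cC^{k}_{sep}(\cG) \neq \emptyset$, which is precisely the order furnishing the witnessing triple.
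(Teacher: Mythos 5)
Your proof is correct and follows essentially the same route as the paper's: both directions reduce $d_{s}(\cG,\cH) \neq 0$ to the existence of a triple in $\cC_{sep}(\cG) \cap \cC_{con}(\cH)$ and then apply Markovianity on $\cG$ against (hypothetical) faithfulness on $\cH$ for the forward direction, and the contrapositive of faithfulness on $\cG$ for the converse. The extra remarks on the indicator-based reduction and the normalization are fine but add nothing beyond what the paper's argument already implicitly uses.
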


\begin{proof}
If $d_{s}(\cG,\cH) \neq 0$, then there must be a triple $(X,Y,\cS) \in \cC_{sep}(\cG)\cap \cC_{con}(\cH)$. Since  $(X,Y,\cS) \in \cC_{sep}(\cG)$ and $P_{\cV}$ is Markovian on $\cG$, we must have that $X \ind_{P_{\cV}} Y | \cS$. If $P_{\cV}$ would be faithful on $\cH$, $(X,Y,\cS) \in \cC_{con}(\cH)$ would imply $X \centernot{\ind}_{P_{\cV}} Y | \cS$, which is a contradiction. Conversely, if $P_{\cV}$ is not faithful on $\cH$, there must a triple $(X,Y,\cS) \in \cC_{con}(\cH)$ with $X \ind_{P_{\cV}} Y | \cS$. Since $P_{\cV}$ is faithful on $\cG$, it follows that $(X,Y,\cS) \in \cC_{sep}(\cG)$ and thus $(X,Y,\cS) \in \cC_{sep}(\cG)\cap \cC_{con}(\cH)$. But this means that $d_{s}(\cG,\cH) \neq 0$.
\end{proof}

The name 'Faithfulness metric' is thus motivated by the fact that if $\cG$ is a ground truth graph in a simulated experiment, the Faithfulness metric measures the amount of Faithfulness violations in a method's output graph.

\begin{remark} \label{rem.distances}
\ 

\vspace{-0.4cm}
\begin{itemize}
    %\item In contrast to the s/c-metric, s-metric and c-metric are no longer invariant to changing the roles of the graphs $\cG_0$ and $\cG$ as the baseline graph $\cG_0$ determines which separation (resp. connection) statements are compared in the first place. Therefore these metrics are primarily useful if there is a clear reference point, such as a ground truth graph in simulated experiments. 
    \item[(i)] Like the s/c-metric before, both the Markov and the Faithfulness metric only depend on the Markov equivalence classes of the two graphs in play.
    \item[(ii)] At first glance, it might seem like $d_{s}(\cG,\cH) = d_{c}(\cH,\cG)$ but this is not the case. The crucial difference lies in the normalization constants of the $k$-th order contributions which need not coincide in $d_{s}(\cG,\cH)$ and $d_{c}(\cH,\cG)$. 
    %\item The statements on approximations after Lemma \ref{lem.metric} can be made analogously for $d_c$ and $d_s$.
    %\item Instead of using the separation indicator $i_{\cG}$, one can also weight triples according to a weight function $w$, see e.g. REF for similar ideas. Considerations regarding the choice of weights also apply to the metrics defined in the forthcoming sections. For explicit examples and a thorough discussion of weight functions, see Section REF of the Appendix.
\end{itemize}    
\end{remark}

%If $\cG_0$ is a known ground truth graph and $\cG$ is the output graph of a causal discovery method that we want to compare to $\cG_0$, then, arguably, missing connections that are present in $\cG_0$ is often considered a worse type of error than missing separations in $\cG_0$.

%\begin{lemma} \label{lem.inclusions}
%Consider two MAGs $\cG_0,\cG$ over the same set of nodes $\X$. Then
%\begin{itemize}
%    \item[(a)] $d_c(\cG_0,\cG) =0$ if and only if $\cG$ is Markov equivalent (w.r.t. $m$-separation) to a supergraph $\cG'$ of $\cG_0$, i.e. $\cG_0 \subset \cG'$.
%    \item[(b)] $d_s(\cG_0,\cG) =0$ if and only if $\cG$ is Markov equivalent (w.r.t. $m$-separation) to a subgraph $\cG'$ of $\cG_0$, i.e. $\cG_0 \supset \cG'$.
%\end{itemize}
%\end{lemma}
\vspace{-0.4cm}
\paragraph{Analogy to False Positive and False Negative Rate}
The term $d_c^k$ in the Markov metric measures the number of connections of the ground truth graph $\cG$ with $|\cS| = k$ that the graph $\cH$ misses (\emph{"false negatives"}) relative to the total number of connections of $\cG$ (\emph{"positives"}) with $|\cS| = k$. Thus, since we scale the Markov metric by $\frac{1}{N-1}$ we compute the average \emph{false negative rate} across all orders for separation/connection statements. Similarly, the Faithfulness metric measures the number of separations of the ground truth graph $\cG$ with $|\cS| = k$ that the graph $\cH$ mistakes for connections (\emph{"false positives"}), relative to the total number of separations of $\cG$ (\emph{"negatives"}) with $|\cS| = k$. The Faithfulness metric, can thus be interpreted as the average \emph{false positive rate} across all orders for separation/connection statements. Like usual FPRs and FNRs, we can also combine Markov and Faithful distance into a ROC-curve to obtain an additional quality metric, see \citep{hyttinen2014constraint}.

\section{A SUMMARY OF ADJUSTMENT IDENTIFICATION DISTANCES} \label{app.AIDs}

Adjustment Identification Distances (AIDs) were introduced in \citep{henckel2024adjustment}. As we often refer to these distances in the main document, we repeat their definition here for the convenience of the reader. This section does not contain any original results, and the reader is encouraged to consult \citep{henckel2024adjustment} for more details.

AIDs are based on the notion of identifying formulas in causal graphs. An identifying formula for the effect of variable $X$ on variable $Y$ in a DAG $\cG$ is an equation that expresses the interventional distribution $P(Y | \mathrm{do}(X))$ purely in terms of the observational distribution of the graphical nodes for any distribution $P$ compatible with $\cG$. An effect is called identifiable if there is at least one identifying formula for it. A (sound and complete) identification strategy is then defined as an algorithm that inputs a tuple $(\cG,X,Y)$ and that returns a correct identifying formula if there is one, and \textsc{none} otherwise.

More specifically \cite{henckel2024adjustment} focus on identification through adjustment. If $X,Y \in \cV$ are nodes and $\cS \subset \cV \backslash \{X,Y \}$ is a subset of nodes, then $\cS$ is a valid adjustment set for the effect of $X$ on $Y$ if  $P(Y | \mathrm{do}(X)) = \int P(y|x,\mathbf{s}) P(\mathbf{s}) d \mathbf{s}$ for any distribution $P$ compatible with $\cG$, now assuming that $P$ has a density with respect to the Lebesgue measure. Since $\mathrm{pa}_{\cG}(X)$ is a valid adjustment set for $(X,Y)$ whenever $Y$ is not itself a parent of $X$, the first option to define an identification strategy is to use the identification formula obtained through parent adjustment; this is what \cite{henckel2024adjustment} call the parent adjustment strategy. Other identification strategies used in \citep{henckel2024adjustment} are ancestor adjustment, which employs the identifying formula obtained by conditioning on all ancestors of $X$, and optimal adjustment \citep{henckel2022graphical} which makes use of the adjustment set of minimal variance. To build a distance measure for DAGs from identification strategies, the final missing ingredient is a verifier, i.e. an algorithm that inputs a tuple $(\cG,X,Y)$ and an identifying formula and that outputs whether this formula is, in fact, a correct identifying formula for the effect of $X$ on $Y$ on $\cG$. 

With these tools at hand, the adjustment identification distance $\mathrm{AID}(\cG,\cH)$ for a given adjustment strategy is defined by (1) computing the identifying formula prescribed by the chosen strategy in $\cH$ for each pair of nodes $(X,Y), \ X \neq Y$; (2) verifying for each pair of nodes whether this formula is correct in $\cG$; and (3) incurring a penalty of $1$ whenever the formula is false in $\cG$.

To generalize AIDs to CPDAGs, it is first necessary to observe that in CPDAGs causal effects may longer be identifiable. Therefore, in CPDAG-AIDs, a penalty is incurred not only when the identifying formula computed in $\cH$ is incorrect in $\cG$, but also if an effect is identifiable in one graph but not in the other.

\section{ALGORITHMS} \label{app.algorithms}

\subsection{Algorithms to compute separation distances} 

In this section, we will present pseudocode to compute the parent-, the MB- and the ZL-separation distance, including a more detailed discussion of their computational complexity. Each algorithm consists of two steps, a separator computation step in one graph and a separator verification in the other. The computational bottleneck is the verification step which will therefore be our main focus.
\paragraph{Separator Computation}

Separators can be computed as follows. We record the algorithmic complexity of each step, and our usage of 'sparse' refers to a bounded node degree $d$ independent of the number of nodes $N$. Recall that a graph is of bounded node degree $d$ if each node is adjacent to at most $d$ other nodes. The number of edges of a graph will be denoted by $M$.

Computing the parent separators $\mathrm{pa}_{\cH}(X\cup Y)$ for all node pairs $(X,Y)$ is of computational complexity $\mathcal{O}(N^3)$ in general: one needs to compute $\mathrm{pa}_{\cH}(X)$ for all $X$ in a first loop ($\mathcal{O}(N^2)$) and then take the union $\mathrm{pa}_{\cH}(X\cup Y)$ ($\mathcal{O}(N)$) for all node pairs $(X,Y)$ ($\mathcal{O}(N^2)$ times), so $\mathcal{O}(N^3)$ in total. In the sparse case, taking the union is only $\mathcal{O}(1)$ so that the complexity reduces to $\mathcal{O}(N^2)$. Similar arguments also yield the same general complexity $\mathcal{O}(N^3)$ for computing ancestor and potential parent separators. Computing the $ZL$-separator of $(X,Y)$ in a MAG $\cH$ is of complexity $\mathcal{O}(N+M)$ \citep{van_der_zander2020finding} and this needs to be executed $\mathcal{O}(N^2)$ times, so that the complexity in terms of $N$ is $\mathcal{O}(N^4)$ in general and $\mathcal{O}(N^3)$ in the sparse case. Computing the Markov blanket $MB_{\cH}(X)$ in a DAG $\cH$ for all $X$ can be done by computing the moralized graph ($\mathcal{O}(N^3)$\footnote{this can actually be reduced to $\approx\mathcal{O}(N^{2.37})$, see \citep{heisterkamp2009directed,wienoebst2023moralization}} and $\mathcal{O}(N^2)$ in the sparse case) and by subsequently computing the adjacencies of $X$ in this graph for all $X$ $\mathcal{O}(N^2)$, yielding an upper bound of $\mathcal{O}(N^3)$ in general and $\mathcal{O}(N^2)$ in the sparse case. Computing the MB-enhanced (possible) parent separator is therefore of complexity $\mathcal{O}(N^3)$ in general and $\mathcal{O}(N^2)$ in the sparse case as well.

To compute the ZL-separator, we use the algorithm introduced in \citep{van_der_zander2020finding} which is of computational complexity $\mathcal{O}(N+M)$. The ZL-separator $\mathrm{ZL}_{\cG}(X,Y)$ depends on both arguments $X,Y$ and hence we need to loop through both, leading to a complexity of $\mathcal{O}(N^4)$ or $\mathcal{O}(N^3)$ if the input graph is sparse.

\paragraph{Separation Verifier}

Since verifying whether the separators computed in a graph $\cH$ are in fact separators in another graph $\cG$ is the computationally most costly step, we will provide more details here. In general, the complexity of this part of the distance computation is $\mathcal{O}(N^2\cdot(N+M))$ or $\mathcal{O}(N^4)$ in terms of $N$ only, as Algorithm \ref{alg.sep_verifier} illustrates. In the sparse case where $M \in \mathcal{O}(N)$, this becomes $\mathcal{O}(N^2)$. We write $\mathrm{nadj}(\cH)$ for the set of non-adjacent node pairs in a graph $\cH$ (computable in $\mathcal{O}(N^2)$). 

\begin{algorithm}[ht!] 
\begin{algorithmic}
\REQUIRE causal graph $\cG$, list $\mathfrak{S}_{\cH}(X,Y), (X,Y) \in  \mathrm{nadj}(\cH)$ of proposed separator found in a previous step based on another graph $\cH$.
\STATE Initialize distance $d \gets 0$;
\FOR{$(X,Y) \in \mathrm{nadj}(\cH)$ \hfill \COMMENT{loop of length $\mathcal{O}(N^2)$} \\}
\STATE check $X \bowtie_{\cG} Y | \mathfrak{S}_{\cH}(X,Y) $; \hfill \COMMENT{complexity $\mathcal{O}(N+M)$}
\IF{check returns \emph{false}}
\STATE $d \mathrel{+}= 1$;
\ENDIF
\ENDFOR
\end{algorithmic}
\caption{Pseudocode to verify separators and compute the corresponding non-normalized SD.} \label{alg.sep_verifier}.
\end{algorithm}

On a DAG or CPDAG $\cG$, MB-enhancement yields an improvement thanks to the \emph{Bayes-Ball algorithm} \citep{geiger1990d,shachter1998bayes}, see also \citep[Appendix D]{henckel2024adjustment}. For a given node $X$ and a set $\cS$, the Bayes-Ball algorithm is able to compute the set $\mathrm{nsep}_{\cG}(X,\cS)= \{ Y \ | \ Y \centernot{\bowtie}_{\cG} X | \cS  \}$ in time $\mathcal{O}(N+M)$. We apply it in Algorithm \ref{alg.sep_verifier_MB} to achieve the desired reduction in complexity.

\begin{algorithm}[ht!] 
\begin{algorithmic}
\REQUIRE DAG $\cG$, list  of Markov blankets $\mathrm{MB}_{\cH}(X), X\in  \cV$, list of separators for exceptional cases $\mathfrak{S}_{\cH}(X,Y), Y \in \mathrm{MB}_{\cH}(X)\backslash \left( \mathrm{pa}_{\cH}(X) \cup \mathrm{ch}_{\cH}(X) \right)$.\\
\STATE Initialize distance $d \gets 0$;
\ \\
\FOR{$X \in \cV$ \hfill \COMMENT{loop of length $N$}\\ }
\STATE get $\mathrm{nsep}_{\cG}(X,\cS)$ with Bayes-Ball; \hfill \COMMENT{complexity $\mathcal{O}(N+M)$ and $\mathcal{O}(N)$ if $\cG$ is sparse}
\STATE $d \gets |\mathrm{nsep}_{\cG}(X,\cS) \cap \cV \backslash\mathrm{MB}_{\cH}(X)|$; \hfill \COMMENT{complexity $\mathcal{O}(N)$}
\FOR{$Y \in \mathrm{MB}_{\cH}(X)\backslash \left( \mathrm{pa}_{\cH}(X) \cup \mathrm{ch}_{\cH}(X) \right)$; \hfill \COMMENT{loop of length $\mathcal{O}(N), \mathcal{O}(1)$ if $\cH$ is sparse} \\ }
\STATE check $X \bowtie_{\cG} Y | \mathfrak{S}_{\cH}(X,Y)$; \hfill \COMMENT{complexity $\mathcal{O}(N+M)$, $\mathcal{O}(N)$ if $\cG$ is sparse}
\IF{check returns \emph{false}}
\STATE $d \mathrel{+}= 1$;
\ENDIF
\ENDFOR
\ENDFOR
\end{algorithmic}
\caption{Pseudocode to verify MB-enhanced separators and to compute the corresponding non-normalized SD. The worst case computational complexity is $\mathcal{O}(N^2 \cdot(N+M))$. If $\cG$ and $\cH$ are sparse, the computational complexity reduces to $\mathcal{O}(N^2)$. } \label{alg.sep_verifier_MB}.
\end{algorithm}

%\begin{algorithm}[h!] 
%\begin{algorithmic}
%\REQUIRE DAG $\cH$ over node set $\cV$.
%\IF{$\cH$ is CPDAG}
%\FOR{$X \in \cV$}
%\STATE get $\mathrm{pa}_{\cH}(X)$ \COMMENT{complexity $\mathcal{O}(N)$};
%\ENDFOR
%\FOR{$X \in \cV$}
%\FOR{$Y \in \cV\backslash\{ X \}$}
%\STATE compute set union of $\mathrm{pa}_{\cH}(X) \cup \mathrm{pa}_{\cH}(Y)$ \COMMENT{complexity $\mathcal{O}(N)$; $\mathcal{O}(1)$ if $\cH$ is sparse.};
%\ENDFOR
%\ENDFOR
%\end{algorithmic}
%\caption{Pseudocode to compute parent separators in DAGs of worst case computational complexity $\mathcal{O}(N^3)$. The computation of ancestor separators and possible parents on a CPDAG is analogous, with the same complexities except that for ancestors the sparsity induced complexity reduction does not hold.} \label{alg.parent_sep}.
%\end{algorithm}

%\begin{algorithm}[h!] 
%\begin{algorithmic}
%\REQUIRE DAG $\cH$ over node set $\cV$.
%\IF{$\cH$ is CPDAG}
%\FOR{$X \in \cV$}
%\STATE get $\mathrm{pa}_{\cH}(X)$ \COMMENT{complexity $\mathcal{O}(N)$};
%\ENDFOR
%\FOR{$X \in \cV$}
%\FOR{$Y \in \cV\backslash\{ X \}$}
%\STATE compute set union of $\mathrm{pa}_{\cH}(X) \cup \mathrm{pa}_{\cH}(Y)$ \COMMENT{complexity $\mathcal{O}(N)$. $\mathcal{O}(1)$ if $\cH$ is sparse. };
%\ENDFOR
%\ENDFOR
%\end{algorithmic}
%\caption{Pseudocode to compute parent separators in DAGs of worst case computational complexity $\mathcal{O}(N^3)$. The computation of ancestor separators and possible parents on a CPDAG is analogous, with the same complexities except that for ancestors the sparsity induced complexity reduction does not hold.} \label{alg.parent_sep}.
%\end{algorithm}

\subsection{Algorithms to compute s/c-metrics} \label{app.algorithms_s/c_metrics}

For the sake of completion, even though the computation is straightforward, we also provide pseudocode to compute the s/c-metric. Once again, we rely on an \emph{oracle function} $\iota_{\cG}(X,Y,\cS)$ that takes in a causal graph and a triple $(X,Y,\cS) \in \cC$ and outputs $0$ if $X$ and $Y$ are separated and $1$ if they are connected by $\cS$ in $\cG$. Practical implementations of such an oracle for DAGs ($d$-separation) are available in the R package \emph{Dagitty} \citep{textor_robust_2016} and the Python packages \emph{networkx} and \emph{Tigramite} (\url{https://github.com/jakobrunge/tigramite}). The oracle implemented in Tigramite is also applicable to MAGs ($m$-separation), tsMAGs and tsDAGs.

\begin{algorithm}[ht!] 
\begin{algorithmic}
\REQUIRE $\cG, \ \cH$ causal graphs (e.g. DAGs) or Markov equivalence classes (e.g. CPDAGs) with $N$ nodes, $\mathrm{Oracle}$ for fitting notion of separation. Sets $\mathcal{L}_k$ of order $k$-triples to be tested for $k=0,\dots,N-2.$
\IF{$\cG$ (or $\cH$) is Markov equivalence class}
\STATE $\cG$ (or $\cH$) $\gets$ member of $\cG$ (or $\cH$); 
\ENDIF
\STATE $\mathrm{dist} = 0.0$;
\FOR{$k =0,\dots, N-2$}
\STATE $\mathrm{dist}_k = 0.0$;
\STATE $\mathrm{count} = 0$;
\FOR{triples $(X,Y,\cS) \in \mathcal{L}_k$}
\STATE $\mathrm{count} \mathrel{+}= 1$;
\STATE $\mathrm{dist}_k \mathrel{+}=|\iota_{\cG}(X,Y,\cS) - \iota_{\cH}(X,Y,\cS)| $;
\ENDFOR
\STATE $\mathrm{dist} \mathrel{+}= \frac{\mathrm{dist}_k}{\mathrm{count}}$;
\ENDFOR
\STATE \textbf{return} $\frac{\mathrm{dist}}{N-1}$.
\end{algorithmic}
\caption{Pseudocode to compute the s/c-metric.} \label{alg.sc_distance}.
\end{algorithm}

%\section{A Recap of Adjustment Identification Distances} \label{app.recap_AIDs}

%\section{MISSING PROOFS}

%The supplementary materials may contain detailed proofs of the results that are missing in the main paper.

%\subsection{Proof of Lemma 3}

%\textit{In this section, we present the detailed proof of Lemma 3 and then [ ... ]}

%\section{Additional Challenges in Method Evaluation}

\section{THE CHALLENGE OF INVALID OUTPUT GRAPHS} \label{sec.challenges}
In this section, we would like to draw attention to an additional challenge that occurs when applying evaluation metrics to causal discovery algorithms based on the PC-algorithm. 
Separation-based (or adjustment-based) metrics that compare two graphs $\cG$ and $\cH$ require that both of these graphs are able to decide whether $(X,Y,\cS)$ is a separation or a connection statement (or whether $\cS$ is adjustment set for $(X,Y)$) in the respective graph. For instance, if the applied notion of separation is $d$-separation, then $\cG$ and $\cH$ should be DAGs or CPDAGs. While score-based causal discovery methods like GES \citep{chickering_optimal_2003}, NOTEARS \citep{zheng_dags_2018}, GLOBE \citep{mian_discovering_2021}, BCCD \citep{claassen_bayesian_2012}, parametric methods like LinGaM \citep{shimizu_linear_2006} or logic-based methods like \citep{hyttinen2014constraint} guarantee that their output is either a  proper causal graph or a MEC, this is no longer true for all PC-based algorithms. Due to assumption violations or contradictory independence test results during their execution, the order independent version of PC \citep{colombo_order-independent_2014} might run into logical conflicts between different orientation rules. or it might label unshielded triples $X-Y-Z$ in the graph as ambiguous. The conservative PC algorithm of \citep{ramsey_adjacency-faithfulness_2006} labels an unshielded triples $X-Y-Z$ ambiguous if the middle node $Y$ is in some but not all separating sets that the method found for $X$ and $Z$. The majority decision rule \citep{colombo_order-independent_2014} labels a triple ambiguous if $Y$ belongs to exactly half of all separating sets found for $X$ and $Z$. If conflicts or ambiguities occur, the output graph is \emph{invalid} in the sense that it does no longer imply separation/connection statements for arbitrary triples $(X,Y,\cS)$. Hence computing separation-based or adjustment-based metrics is no longer straightforward.

\paragraph{Dealing with Conflicting Orientations}

Since conflicts between different orientation rules are serious errors that flag assumption violations of some form, the most conservative way to treat them in a performance evaluation of a PC-like method is to record the proportion of their occurrence and then disregard graphs with conflicts for further steps. A well-performing method should lead to (a) a low proportion of graphs with conflicts and (b) good results w.r.t. the chosen evaluation metric on its valid output graphs.

\paragraph{Dealing with Ambiguities} 

\begin{algorithm}[t!] 
\begin{algorithmic} 
 \REQUIRE $\cG, \ \cH, \ \cU_{c}, \ \cU_{a}$ as defined above. A distance metric $d(\cdot,\cdot)$.
 \STATE
 \STATE Initialize empty list $\mathcal{L}$.
 \FORALL{$\cB \subset \cU_a$}
    \STATE $\cH_{\cB} \gets \cH$;
    \STATE $\mathrm{COL} \gets \cU_{c} \cup \cB$;
    \STATE Collider phase: \textbf{try}: orient all $(X,Y,Z) \in \mathrm{COL}$ as colliders on $\cH_{\cB}$;
    \IF{No conflicting orientations and no cycle in $\cH_{\cB}$}
    \STATE Orientation phase: \textbf{try}: Apply Meek's orientation rules to $\cH_{\cB}$;
    \IF{No conflicting orientations and no cycle in $\cH_{\cB}$}
    \STATE append $\cH_{\cB}$ to $\mathcal{L}$;
    \ENDIF
    \ENDIF
 \ENDFOR
 \IF{$\mathcal{L} = \emptyset$}
 \STATE max-dist, mean-dist, min-dist $\gets 1$;
 \ELSE
    \STATE max-dist $\gets \max_{\cB \in \mathcal{L}} d(\cG,\cH_{\cB})$;
    \STATE mean-dist $\gets \mathrm{mean}_{\cB \in \mathcal{L}} d(\cG,\cH_{\cB})$;
    \STATE min-dist $\gets \min_{\cB \in \mathcal{L}} d(\cG,\cH_{\cB})$;
 \ENDIF
\STATE \textbf{return} max-dist, mean-dist, min-dist.

\end{algorithmic}
\caption{Pseudocode to compute distance measures for graphical outputs with ambiguities.} \label{alg.ambiguities}
\end{algorithm}

In contrast to conflicting orientations, the appearance of ambiguities is a sign that a method is conservative about the inferences it is willing to make. Discarding graphs with ambiguities entirely, therefore seems like an overly harsh punishment for a method not being willing to make strong claims. We will now describe a procedure for computing best-case and worst-case distances to a ground truth for graphs with ambiguities by considering all possible ways in which ambiguities may be interpreted. The proposed strategy is similar to how the structural intervention distance is computed for a CPDAG by iterating through all DAGs that are represented by a given CPDAG, see \citep{peters_structural_2014}. The procedure will start with the following input data:

\begin{itemize}
    \item A (ground truth) DAG or CPDAG $\cG$;
    \item An undirected graph $\cH$ that will serve as the skeleton of a DAG or tsDAG;
    \item A partition of the set of unshielded triples $\cU = \cU_{c} \cup \cU_{nc} \cup \cU_{a} $ of $\cH$ into colliders ($ \cU_{c}$), non-colliders  ($ \cU_{nc}$) and ambiguous triples ($ \cU_{a}$).
    %\item An indicator $x \in \{ \text{s/c}, \text{c}, \text{s} \}$ to specify which distance to compute.
\end{itemize}

This information is part of the output of the conservative PC algorithm \citep{ramsey_adjacency-faithfulness_2006}, the stable PC algorithm with the majority rule \citep{colombo_order-independent_2014} or the Tigramite implementation of PCMCI+ with the conservative or majority contemporary collider rule \citep{runge_discovering_2020}. 
%We will assume that the set of colliders is \emph{conflict-free} that is to say, if $X \to Y \leftarrow Z$ is a collider, then there is no collider

From this input, we now generate a list of CPDAGs $(\cH_{\cB})$ where $\cB \subset \cU_{a}$. For each CPDAG $\cH_{\cB}$ in this list, we then compute the desired distance $d(\cG,\cH_{\cB})$ and, finally, we arrive at the best-case, average and worst-case estimate
\begin{align*}
    \min_{\cB \subset \cU_a} d_x(\cG,\cH_{\cB}), \qquad \mathrm{mean}(d_x(\cG,\cH_{\cB})) \qquad \max_{\cB \subset \cU_a} d_x(\cG,\cH_{\cB}).
\end{align*}

The details are outlined as pseudocode in Algorithm \ref{alg.ambiguities}.

\section{ADDITIONAL EXPERIMENTS} \label{app.additional_experiments}

In this section, we provide additional empirical experiments on separation-based distance measures.

\subsection{Correlations of symmetrized distance measures} \label{app.correlations}

In Figure \ref{fig.correlations}, we plot the correlation coefficients of the parent-SD with other distance measures for Erdös-Renyi DAGs $G(N,p)$ with $N= 25$ nodes for different values of $p$ ($500$ runs per parameter). We decided to use symmetrized versions of the included distance measures as this is a bit more similar to the SHD for a fair comparison. Parent-SD and its MB-enhanced variant are  strongly correlated for all values of $p$, suggesting that it might generically be advantageous to employ the latter due to its faster runtime. All other metrics are strongly correlated for very sparse graphs, but this correlation drops off rapidly and then increases again. Interestingly, for $0.25\leq p \leq 0.85$, parent-SD and SHD even become negatively correlated. These correlations do not differ substantially between DAGs and CPDAGs. We include a similar plot for the correlation of the ZL-SD and the SHD on mixed graphs further below in Appendix \ref{app.corr_ZLSD_SHD}. In addition, we compute the correlation of separation distances with the full separation metric that computes all separation statements on $N=10$ nodes in Appendix \ref{app.corr_with_sc_metric}. While this correlation is close to $1$ for sparse DAGs, it decreases when the DAGs become more dense.

 \begin{figure*}[h!]
    \centering
    \begin{subfigure}[t]{0.4\textwidth}
        \centering
        \includegraphics[height=1.6in]{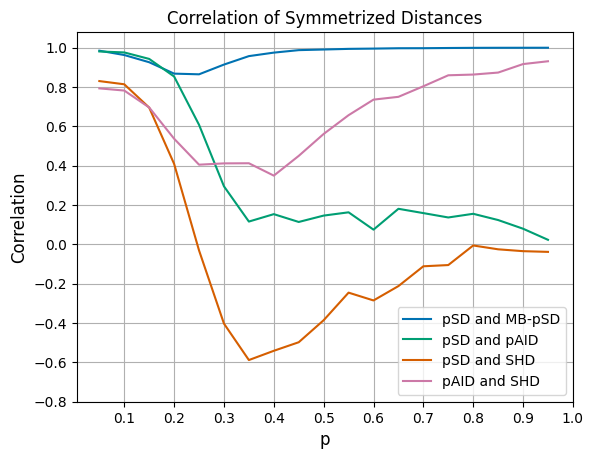}
        %\caption{Lorem ipsum} 
    \end{subfigure}
    \begin{subfigure}[t]{0.4\textwidth}
        \centering
        \includegraphics[height=1.6in]{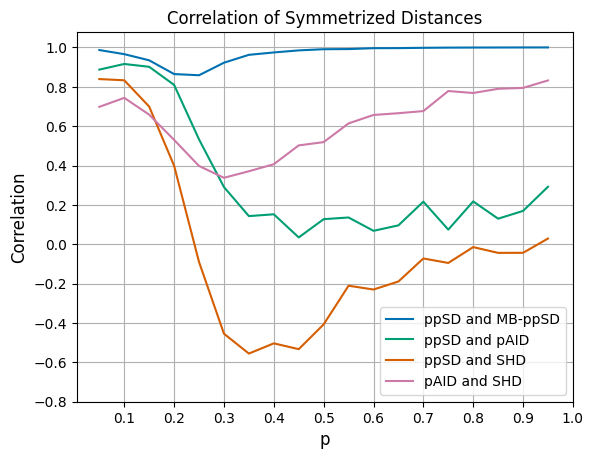}
        %\caption{g} \label{fig.diff_CPDAGs}
    \end{subfigure}
   %\begin{subfigure}[t]{0.4\textwidth}
   %    \centering
   %     \includegraphics[height=1.8in]{Figures/average_diffences_symmetrized_distances_DAGs_25_500.png}
        %\caption{Lorem ipsum} 
   % \end{subfigure}
    %\begin{subfigure}[t]{0.4\textwidth}
    %    \centering
    %    \includegraphics[height=1.8in]{Figures/average_diffences_symmetrized_distances_CPDAGs_25_500_with_1.0.png}
        %\caption{g} \label{fig.diff_CPDAGs}
    %\end{subfigure}

    \caption{Correlation coefficients of symmetrized distance measures applied to Erdös-Renyi graphs $\cG,\cH \sim G(N,p)$ (left) and their associated CPDAGs (right) for increasing values of $p$. We ran $500$ experiments per parameter.} \label{fig.correlations}
\end{figure*}

\subsection{Correlation of ZL-SD and SHD on mixed graphs} \label{app.corr_ZLSD_SHD}

First, we repeat the experiments on the correlation of distance measures on mixed graphs. In this context, the available comparison metrics are the ZL-SD and the SHD, so these are the only ones we consider. We generate $500$ random mixed graphs by the following scheme. We first generate a causal order with a random permutation. Then for every node pair $(X,Y)$ for which $X<Y$ in the causal order, we generate an edge between them with probability $p$. If an edge is drawn, we orient the edge as $X\to Y$ with probability $q=0.2,0.7,0.9$ and as $X \leftrightarrow Y$ with probability $1-q$. The resulting graphs are acyclic but not necessarily MAGs as they might not be ancestral or might have almost cycles. The lack of ancestrality is unproblematic for the computation of the ZL-SD as the separator search will simply return \emph{None} for two non-adjacent nodes that cannot be separated. We did not apply any checks for almost-cycles as this would have introduced a significant computational overhead and would not have changed the general form of these plots.

%\begin{figure}
%    \centering
%    \includegraphics[width=0.5\linewidth]{Figures/symmetrized_distances_mixed_graphs_25_nodes_500_exp.png}
%    \caption{Correlation of ZL-SD and SHD across}
%    \label{fig.correlations_mixed}
%\end{figure}

\begin{figure*}[ht!]
    \centering
    \begin{subfigure}[t]{0.3\textwidth}
        \centering
        \includegraphics[height=1.6in]{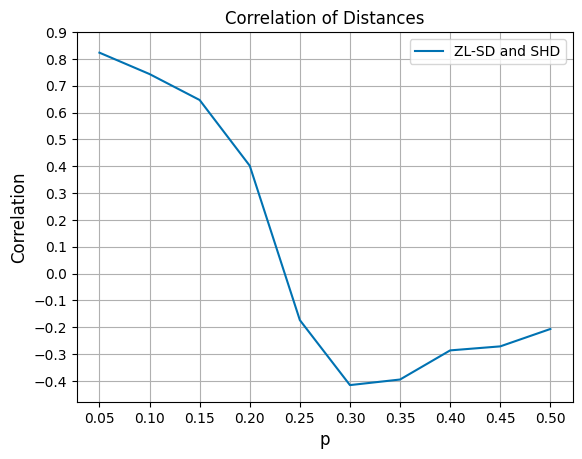}
        %\caption{Lorem ipsum} 
    \end{subfigure}
    \begin{subfigure}[t]{0.3\textwidth}
        \centering
        \includegraphics[height=1.6in]{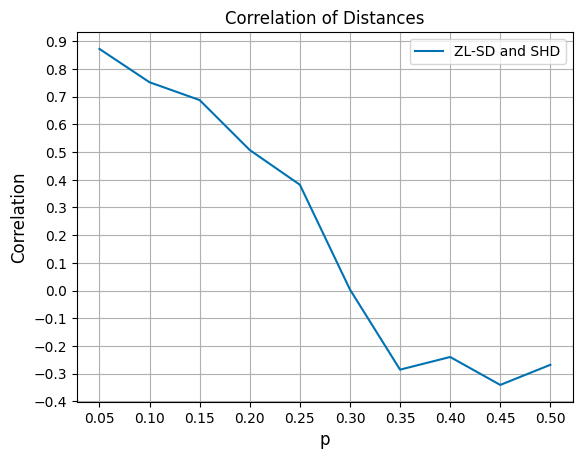}
        %\caption{g} \label{fig.diff_CPDAGs}
    \end{subfigure}
    \begin{subfigure}[t]{0.3\textwidth}
        \centering
        \includegraphics[height=1.6in]{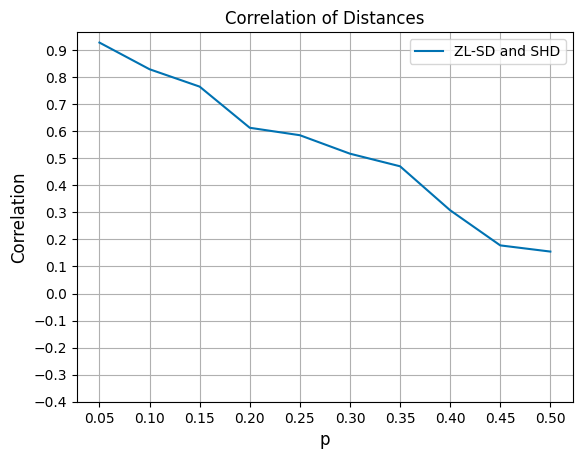}
        %\caption{g} \label{fig.diff_CPDAGs}
    \end{subfigure}
    \caption{Correlation of ZL-SD and SHD across $500$ random mixed graphs with probability $q=0.2$ (left), $q=0.7$ (middle) and $q=0.9$ (right) that an existing edge is directed.}
    \label{fig.correlations_mixed}
\end{figure*}

\subsection{Correlation of SDs and the s-metric} \label{app.corr_with_sc_metric}
In this section, we present a plot on the correlation of separation distances with the full s-metric across $100$ pairs of Erdös-Renyi DAGs $\cG,\cH \sim G(N,p)$ on $10$ nodes for different values of $p$. More precisely, we compare the values of $d^{\mathfrak{S}}(\cG,\cH)$ to the values of $d^s(\cH,\cG)$ as these metrics are the most similar conceptually: in both cases separations found in $\cH$ are verified in $\cG$. We see that all SDs are strongly correlated with $d^s(\cH,\cG)$ for sparse graphs but the correlation drops for more dense ones. This illustrates that the choice of only specific separators according to a sep-strategies approximates a full comparison well on sparse graphs but the tradeoff of this selection becomes more apparent, the denser the graphs become.

\begin{figure}[t!]
    \centering
    \includegraphics[height=1.6in]{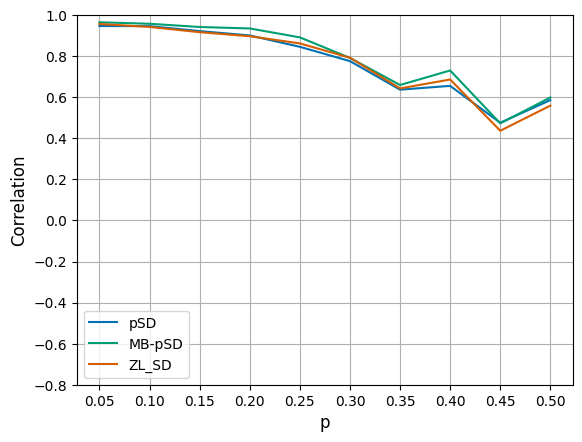}
    \caption{Correlation of SDs with the full s-metric  $d^s(\cH,\cG)$.}
   \label{fig.correlations_with_full}
\end{figure}

\subsection{Monte-Carlo approximation of the full s/c-metric}
Due to the exponential growth of the number of separation statements in the number of nodes $N$, computing the full s/c-distance as a weighted average across all separation statements is very time-consuming and only feasible for a small number of nodes. A possible way to alleviate this issue is to specify a fixed number $L$ of separation statements to be tested per order and to then draw these $L$ statements randomly. In this case the normalization constant in the order terms of the s/c-metric has to be set to $L^{-1}$. Figure \ref{fig.approximation_of_full} below shows that at least for small graphs, for which the full s/c-distance can still be checked, this approximation yields very good results. We note however that there is a trade-off between the quality of the approximation and the computational speed-up as high values of $L$ will give better approximation while low values will reduce the computational effort. Nevertheless, even with low values of $L$ the number of separation statements to be checked,  and hence the computational cost, is still significantly higher than for the deterministic sep-strategies. 

\begin{figure}[t!]
    \centering
    \includegraphics[height=1.6in]{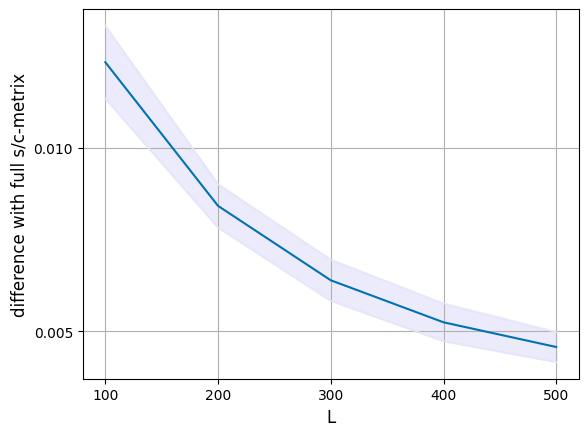}
    \caption{Difference to the full s/c-metric when only $L=100,200,300, 400,500$ separation statements are drawn randomly per order on Erdös-Renyi DAGs with $N=10$ nodes and edge probability $p=0.4$.}
   \label{fig.approximation_of_full}
\end{figure}

\paragraph{Reproducibility}
The Python code used for the experiments in this work is available in the repository \url{https://github.com/JonasChoice/SDs_experiments}.

\end{document}